\newtheorem{observation}{Observation}
\newtheorem{corollary}{Corollary}
\newcommand{\tr}{{\mathrm{tr}}}
\newcommand{\eins}{\mathbbm{1}}
\renewcommand{\vr}{\ensuremath{\varrho}}
\begin{document}
\title{Detecting high-dimensional entanglement by randomized product projections}	
\author{Jin-Min Liang}
\affiliation{School of Mathematics and Statistics, Beijing Technology and Business University, Beijing 100048, China}
\author{Shuheng Liu}
\affiliation{State Key Laboratory for Mesoscopic Physics, School of Physics, Frontiers Science Center for Nano-optoelectronics, $\&$ Collaborative Innovation Center of Quantum Matter, Peking University, Beijing 100871, China}
\author{Shao-Ming Fei}
\affiliation{School of Mathematical Sciences, Capital Normal University, Beijing 100048, China}
\author{Qiongyi He}
\email{qiongyihe@pku.edu.cn}
\affiliation{State Key Laboratory for Mesoscopic Physics, School of Physics, Frontiers Science Center for Nano-optoelectronics, $\&$ Collaborative Innovation Center of Quantum Matter, Peking University, Beijing 100871, China}
\affiliation{Collaborative Innovation Center of Extreme Optics, Shanxi University, Taiyuan, Shanxi 030006, China}
\affiliation{Hefei National Laboratory, Hefei 230088, China}

\date{\today}
\begin{abstract}
    The characterization of high-dimensional entanglement plays a crucial role in the field of quantum information science. Conventional entanglement criteria measuring coherent superpositions of multiple basis states face experimental bottlenecks on most physical platforms due to limited multi-channel control. Here, we introduce a practically efficient detection strategy based on randomized product projections. We show that the first-order moments of such projections can be used to estimate entanglement fidelity, thereby enabling practical and efficient certification of the Schmidt number in high-dimensional bipartite systems. By constructing optimal observables, it is sufficient to merely measure a single basis state, substantially reducing experimental overhead. Moreover, we present an algorithm to obtain a lower bound of the Schmidt number with a high confidence level from a limited number of experimental data. Our results open up resource-efficient experimental avenues to detect high-dimensional entanglement and test its implementations in modern information technologies.
\end{abstract} 

\maketitle

\textit{Introduction}---Quantum entanglement is a valuable resource in various quantum applications from quantum cryptography~\cite{scarani2009security,xu2020secure,pirandola2020advances}, quantum metrology~\cite{giovannetti2011advances,degen2017quantum,toth2014quantum}, and quantum computation~\cite{amico2008entanglement,zhao2025entanglement}. The preparation, manipulation, and characterization of quantum entanglement are important in quantum information science. Beyond qubit systems, high-dimensional entanglement has attracted tremendous interest from bipartite~\cite{erhard2020advances,dada2011experimental,nape2021measuring,tabia2022bell,alessandro2025semidefinite,tavakoli2024enhanced} to multipartite systems~\cite{cobucci2024detecting,liu2024nonlinear}. The preparation of high-dimensional entangled states is feasible in different experimental platforms such as polar molecules~\cite{yan2013observation}, cold atomic ensembles~\cite{parigi2015storage}, trapped ions~\cite{senko2015realization}, and photonic systems~\cite{mair2001entanglement,wang2012terabit,krenn2016automated,valencia2020high}. High-dimensional entanglement in quantum communication brings several benefits by increasing information capacity~\cite{vertesi2010closing} and improving noise resistance~\cite{cerf2002security,zhu2021high}. These advantages have sparked significant interest in characterizing high-dimensional entanglement.

Entanglement witness provides an experimentally feasible way to detect entanglement of high-dimensional states~\cite{horodecki1996separability,guhne2009entanglement,bavaresco2018measurements,zhang2024analyzing}. The typical Schmidt number ($\mathrm{SN}$) witness is defined by the fidelity $F(\vr)=\tr(\vr|\phi_d^{+}\rangle\langle\phi_d^{+}|)$ between the target state $\vr$ and a maximally entangled state $|\phi_d^{+}\rangle=\frac{1}{\sqrt{d}}\sum_{j=0}^{d-1}|jj\rangle$. Nevertheless, the direct measurement of the fidelity-based SN witness has become increasingly challenging as the dimension grows, mainly due to the requirement of $d(d+1)$ projections~\cite{morelli2023resource}. To reduce the experimental cost, a sequence of $\mathrm{SN}$ criteria has been introduced~\cite{morelli2023resource} by estimating a lower bound of the fidelity $F(\vr)$ from mutually unbiased bases (MUBs)~\cite{wootters1989optimal,ivonovic1981geometrical}. This strategy is quite powerful in verifying high $\mathrm{SN}$, which requires only a small number of MUBs.

The MUBs method exhibits inherent trade-offs between estimation accuracy of fidelity $F(\vr)$ and the required number of implemented MUBs~\cite{morelli2023resource}. Fewer projections ease experiments but weaken fidelity's lower bound, while increased accuracy demands more MUB measurements. Each MUB measurement involves projections onto $d$ basis states, and hence the experimenter needs to simultaneously and precisely control multiple channels. However, on many platforms such as photonic time-frequency domain platforms~\cite{martin2017quantifying,friis2019entanglement}, available measurements are limited to measuring coherent superpositions of two basis states. On the other hand, performing a large number of different measurements is infeasible, and thus experimental data obtained from the lab is limited. These challenges motivate us to develop entanglement certification protocols with limited control and limited experimental data.

\begin{figure}[ht]
    \centering
    \includegraphics[scale=0.3]{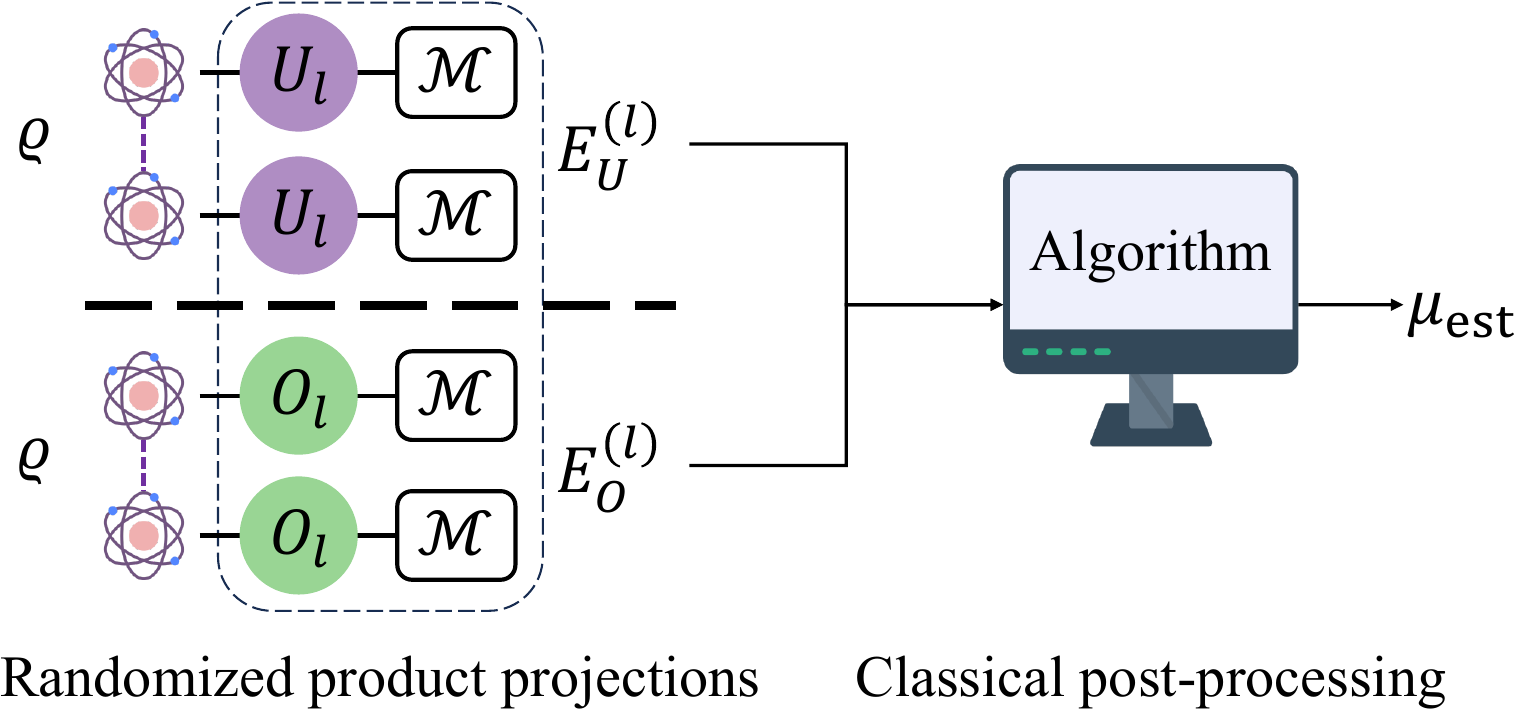}
    \caption{Illustration of randomized product projections for estimating the Schmidt number of a bipartite state $\vr$. The first stage implements local operations $U_l^{\otimes2}$ and $O_l^{\otimes2}$ and then measures on the well-constructed observable $\mathcal{M}^{\otimes2}$. The random unitaries $U_l$ and random orthogonal matrices $O_l$ are sampled independently from the Haar measure on unitary and orthogonal groups~\cite{goodman2000representations,collins2006integration,gross2007evenly,dankert2009exact}. From the obtained expectation values $E_{U}^{l}$ and $E_{O}^{(l)}$, the algorithm outputs the Schmidt number estimation $\mu_{\mathrm{est}}$.}
    \label{Fig1}
\end{figure}

In this work, we address the above issues by introducing randomized product projections, as illustrated in Fig.~\ref{Fig1}. In particular, Alice and Bob rotate the state under the same unitary or orthogonal matrix, followed by measurements on product observables. With the obtained first-order moments, we propose a Schmidt number criterion of a bipartite high-dimensional state by directly estimating the fidelity $F(\vr)$ in a randomized way. Crucially, the proposed criterion is endowed with robustness against orthogonal noise since the orthogonal invariance of the first-order moments. We construct an optimal observable $\mathcal{M}$ as a single projection and constrain the required measurements to be a single basis state projection, which is particularly efficient for experimental platforms such as integrated-optics platforms~\cite{schaeff2015experimental} and practical for time-frequency-domain platforms~\cite{martin2017quantifying,friis2019entanglement}. Considering the limited experimental data, we develop an algorithm to obtain a lower bound of the Schmidt number with a high confidence level from a finite number of projections. We theoretically establish that the total number of required random operations scales as $\mathcal{O}(d^2)$ for general quantum states, which is consistent with the direct fidelity estimation. However, when the target state is a maximally entangled state subject to typical preparation noise, the number of random operations reduces to $\mathcal{O}(1)$, independent of local dimension $d$. In this case, our protocol outperforms both direct fidelity estimation and the MUB approach, which needs $\mathcal{O}(md)$ settings, where $m$ is the number of MUBs used. Numerical results show that $60$ random operations suffice for mixed states of local dimensions $ d=20, 30, 40, 50$ to yield a more accurate Schmidt number estimation than previous approaches~\cite{morelli2023resource,liu2023characterizing,wyderka2023probing}. Consequently, our scheme significantly reduces the resource overhead for entanglement analysis of high-dimensional bipartite quantum states.

\textit{Schmidt number and fidelity-based witness}---The Schmidt number ($\mathrm{SN}$) of a pure quantum state $|\psi\rangle$ is defined by Schmidt rank ($\mathrm{SR}$), i.e., the number of nonzero Schmidt coefficients in their Schmidt decomposition~\cite{terhal2000schmidt}, $|\psi\rangle=\sum_{l=1}^{\mu}\sqrt{s_l}|a_lb_l\rangle$, where $s_l>0$, $\sum_{l}s_l=1$, $\mu$ denotes the $\mathrm{SR}$, and $\{a_l\}$, $\{b_l\}$ are respectively orthonormal states. For mixed state $\vr$, its $\mathrm{SN}$ is defined by~\cite{terhal2000schmidt}, $\mathrm{SN}(\vr)=\min_{\mathcal{D}(\vr)}\{\mu_{\max}:\vr=\sum_{l}p_l|\psi_l\rangle\langle\psi_l|,\mu_{\max}=\max_{l}\mu_l\}$, where $\mu_l$ is the $\mathrm{SR}$ of pure state $|\psi_l\rangle$ and $\mathcal{D}(\vr)$ denotes the set of all pure-state decomposition of the state $\vr$. A larger $\mathrm{SN}$ in a bipartite state implies a higher degree of entanglement.

A typical way to determine the $\mathrm{SN}$ of a state $\vr$ is to estimate the fidelity $F(\vr)=\tr(\vr|\phi_d^{+}\rangle\langle\phi_d^{+}|)$ between $\vr$ and a maximally entangled state $|\phi_d^{+}\rangle=\sum_{j=0}^{d-1}|jj\rangle/\sqrt{d}$. For any $\vr$ with $\mathrm{SN}(\vr)=\mu$, $F(\vr)\leq\mu/d$~\cite{terhal2000schmidt}. This inequality can be transformed into an $\mathrm{SN}=\mu+1$ witness, $\mathcal{W}_\mu=\mu\eins_d^{\otimes2}/d-|\phi_d^{+}\rangle\langle\phi_d^{+}|$~\cite{terhal2000schmidt,sanpera2001schmidt,guhne2021geometry,wyderka2023construction} such that $\tr(\vr\mathcal{W}_\mu)\geq0$ for all $\vr$ with $\mathrm{SN}(\vr)\leq\mu$, where $\eins_d$ denotes the identity. Thus, if $\tr(\vr\mathcal{W}_\mu)<0$, $\vr$ has at least $\mathrm{SN}(\vr)=\mu+1$.
	
\textit{Schmidt number criterion via randomized product projections}---Consider a two-qudit state $\vr$ shared by Alice and Bob. To detect the $\mathrm{SN}$ of $\vr$, we measure $\vr$ with randomized product projections~\cite{van2022hardware,imai2024collective}. The measurement process includes two stages. The first stage allows Alice and Bob to perform measurements with any product observables $\mathcal{M}^{\otimes2}$ after applying the same random unitary $U^{\otimes2}$ on the state $\vr$. The expectation values for a fixed unitary $U$ is, $E_U(\vr)=\tr[U^{\otimes2}\vr U^{\dag\otimes2}\mathcal{M}^{\otimes2}]$. This randomized measurement strategy has been used to classify topological order~\cite{van2022hardware} and analyze multipartite entanglement~\cite{imai2024collective}. When Alice and Bob are on the same platform, the local operation $U^{\otimes2}$ can be implemented by passing both parties through a single $U$ setup at different times, thereby reducing the operational costs while simultaneously mitigating imperfections of the unitary.
	
Then, we use a random orthogonal matrix $O$ satisfying $O^{\top}O=OO^{\top}=\eins_d$, where $\top$ denotes the transpose operation. In this scenario, Alice and Bob implement the same orthogonal matrix $O^{\otimes2}$ on the state $\vr$ and perform measurements with product observables $\mathcal{M}^{\otimes2}$, similar to the unitary process. The expectation value for a fixed orthogonal matrix $O$ is, $E_O(\vr)=\tr[O^{\otimes2}\vr O^{\top\otimes2}\mathcal{M}^{\otimes2}]$. The implementation of $O^{\otimes2}$ can also be simplified using a strategy similar to that of $U^{\otimes2}$.
	
After sampling random unitary $U$ and orthogonal matrix $O$ respectively according to the Haar measure on unitary and orthogonal groups~\cite{goodman2000representations,collins2006integration,gross2007evenly,dankert2009exact}, we consider two first-order moments,
    \begin{align}\label{II:eq1}
        \mathcal{R}(\vr)=\int dUE_U(\vr),~
        \mathcal{Q}(\vr)=\int dOE_O(\vr).
    \end{align}
The experimental feasibility of our protocol critically depends on the rank $r$ (i.e. the number of nonzero eigenvalues) of the observable $\mathcal{M}$. When the observable has rank $r$, estimating the expectation of $E_U(\vr)$ and $E_O(\vr)$ requires measurements across $r$ basis states, simultaneously controlling $r$ experimental channels in physical platforms. To minimize this technical complexity, we construct an optimal observable with $r=1$, $\mathcal{M}=|j\rangle\langle j|$, where $\{|j\rangle\}_{j=0}^{d-1}$ denotes the computational basis in a $d$-dimension Hilbert space. This construction reduces the number of channel controls to one, independent of local dimensions $d$. Such low-rank implementations circumvent multi-channel challenges, making our protocol more easily implementable in practical physical platforms.

Using the optimal observable $\mathcal{M}=|j\rangle\langle j|$, we then have
\begin{align}
    \mathcal{R}(\vr)&=\frac{1+\tr(\vr\mathbb{S})}{d(d+1)},
    \mathcal{Q}(\vr)=\frac{1+\tr\left(\vr\mathbb{S}\right)+\tr\left(\vr\mathbb{S}^{\top_B}\right)}{d(d+2)},\label{II:eq6}
\end{align}
by using the Schur–Weyl duality in the unitary and orthogonal group~\cite{goodman2000representations,collins2006integration,gross2007evenly,dankert2009exact}. Here, $\mathbb{S}=\sum_{j,k=0}^{d-1}|jk\rangle\langle kj|$ is the $\mathrm{SWAP}$ operator and $\top_B$ denotes the partial transpose to the subsystem B. See detailed proof in Appendix~\ref{AppendixA}.
 
Now, we present our result showing that the moments $\mathcal{R}(\vr)$ and $\mathcal{Q}(\vr)$ in Eq.~(\ref{II:eq6}) can be used to calculate the fidelity $F(\vr)=\tr(\vr|\phi_d^{+}\rangle\langle\phi_d^{+})$ and characterize high-dimensional entanglement.
\begin{observation}\label{ob1}
    For any bipartite state $\vr$ of equal dimension $d$, the fidelity $F(\vr)$ is given by
    \begin{align}\label{II:eq2}
        F(\vr)=(d+2)\mathcal{Q}(\vr)-(d+1)\mathcal{R}(\vr).
    \end{align}
    The Schmidt number of $\vr$ is lower bounded by
    \begin{align}\label{II:eq3}
        \mu=
        \begin{cases}
            1, & F(\vr)=0,\\
            \lceil dF(\vr)\rceil, & F(\vr)\in(0,1],\\
        \end{cases}
    \end{align}
    where the $\lceil\cdot\rceil$ is ceil function.
\end{observation}
Equation~(\ref{II:eq2}) provides a formula to estimate the fidelity from the first-order moments $\mathcal{R}(\vr)$ and $\mathcal{Q}(\vr)$ obtained by randomized product projection. Then, using obtained fidelity we obtain a $\mathrm{SN}$ estimation of $\vr$ based on Eq.~(\ref{II:eq3}). Detailed proof is presented in Appendix~\ref{AppendixA}, which utilizes the Eq.~(\ref{II:eq6}), the relation $d|\phi_d^{+}\rangle\langle\phi_d^{+}|=\mathbb{S}^{\top_B}$, as well as the property of the fidelity-based witness $\mathcal{W}_\mu$. We remark that Eq.~(\ref{II:eq3}) is equivalent to the fidelity-based witness $\mathcal{W}_\mu$. This equivalence implies that our criterion and $\mathcal{W}_\mu$ possess the same capability to detect $\mathrm{SN}$. Moreover, Eq.~(\ref{II:eq2}) is invariant under the transformation $O\otimes O$. This indicates that our approach is robust against orthogonal $O\otimes O$ noise in the measurement stage. As a comparison, the previous methods, such as MUBs and direct fidelity estimation, require well-controlled measurements and thus are fragile to the measurement noise model.

Note that randomized measurements have been extensively investigated for entanglement characterization~\cite{imai2021bound,wyderka2023probing,liu2023characterizing,yi2025certifying,giovanni2025entanglement} in a reference-frame-independent manner. However, the randomized product projection in observation~\ref{ob1} requires both parties to synchronise their measurement bases. Consequently, our protocol is less suited for scenarios involving distant laboratories.

Observation~\ref{ob1} provides a theoretical criterion and naturally requires us to estimate the moments $\mathcal{R}(\vr)$ and $\mathcal{Q}(\vr)$ exactly with infinite samples of random operations, which is impossible in a practical experimental scenario. Therefore, we next introduce an algorithm to obtain an $\mathrm{SN}$ estimation from the measured experimental data with limited projections.

\textit{Experimental data acquisition and the number of randomized product projections}---For the estimation of the integral over unitary and orthogonal groups, we independently sample $N$ unitaries $\{U_l\}_{l=1}^{N}$ and $N$ orthogonal matrices $\{O_l\}_{l=1}^{N}$ from a unitary $2$-design and orthogonal $2$-design~\cite{goodman2000representations,collins2006integration,gross2007evenly,dankert2009exact}. After performing measurements on a rank-one observable $\mathcal{M}^{\otimes2}=|j\rangle\langle j|^{\otimes2}$, we obtain a set of the expectation values $\mathcal{S}_U^{(N)}=\{E_U^{(l)}(\vr)\}_{l=1}^{N}$ and $\mathcal{S}_O^{(N)}=\{E_O^{(l)}(\vr)\}_{l=1}^{N}$, where the expectations values are
\begin{align}
    E_U^{(l)}(\vr)&=\tr[U_l^{\otimes2}\vr U_l^{\dag\otimes2}\mathcal{M}^{\otimes2}],\label{III:eq1}\\
    E_O^{(l)}(\vr)&=\tr[O_l^{\otimes2}\vr O_l^{\top\otimes2}\mathcal{M}^{\otimes2}].\label{III:eq2}
\end{align}
The data set $\mathcal{S}_U^{(N)}$ and $\mathcal{S}_O^{(N)}$ can then be used to obtain an estimator of the fidelity $F(\vr)$.

The number of randomized product projections depends on the number of random operations, $N$. Note that measuring the observable $\mathcal{M}^{\otimes2}$ on states $U_l^{\otimes2}\vr U_l^{\dag\otimes2}$ and $O_l^{\otimes2}\vr O_l^{\top\otimes2}$ equivalent to perform the global product projection $|e^{l}\rangle\langle e^{l}|^{\otimes2}$ and $|u^{l}\rangle\langle u^{l}|^{\otimes2}$ on $\vr$, respectively, where $\{|e^{l}\rangle=U_l^{\dag}|j\rangle\}_{l=1}^{N}$ and $\{|u^{l}\rangle=O_l^{\top}|j\rangle\}_{l=1}^{N}$. As a result, the number of projections for estimating the fidelity $F(\vr)$ is $2N$. 

In Appendix~\ref{AppendixB}, we show that in the worst case the number of random operations scales as $N=\mathcal{O}(d^2)$. For maximally entangled states under a depolarizing, dephasing, or random noise channel, $N=\mathcal{O}(1)$ is a constant independent of local dimension $d$. This scaling matches the numerical results in the next section. As a comparison, the scaling of the MUBs method~\cite{morelli2023resource} is $\mathcal{O}(md)$, where $m\in[2,d+1]$ is the number of MUBs. Thus, our randomized projection is significantly efficient for large local dimensions.

\textit{Classical post-processing algorithm for the Schmidt number estimation}---With the data set $\mathcal{S}_U^{(N)}$ and $\mathcal{S}_O^{(N)}$ in hand, the natural estimator of the fidelity is inferred via Eq.~(\ref{II:eq2}),
\begin{align}\label{IV:eq1}
    F_{e}(\vr)=(d+2)\mathcal{Q}_e(\vr)-(d+1)\mathcal{R}_e(\vr),
\end{align}
where the estimated moments $\mathcal{R}_e(\vr)=\mathbb{E}[\mathcal{S}_U]$ and $\mathcal{Q}_e(\vr)=\mathbb{E}[\mathcal{S}_O]$ are defined as the mean values of $\mathcal{S}_U$ and $\mathcal{S}_O$, respectively.
 
To estimate the statistical distribution of $F_{e}(\vr)$, a direct approach is to calculate many $F_{e}(\vr)$ from several data sets. However, obtaining more data sets is extremely challenging since additional experimental runs are required to estimate the expectation values of observables. In the following, we present an algorithm to estimate an $\mathrm{SN}$ lower bound by only using a single data set $\mathcal{S}_U$ and $\mathcal{S}_O$.

For each expectation values $E_U^{(l)}(\vr)$ and $E_O^{(l)}(\vr)$ defined in Eqs.~(\ref{III:eq1},\ref{III:eq2}), we begin by estimating the quantity $F_{e}^{(l)}(\vr)=(d+2)E_O^{(l)}(\vr)-(d+1)E_U^{(l)}(\vr)$ via Eq.~(\ref{II:eq2}). Then, we collect the data set $\tilde{F}_{e}(\vr)=\{F_{e}^{(l)}(\vr)\}_{l=1}^{N}$ and construct the confidence interval of $F(\vr)$, $[F_{\mathrm{lb}},F_{\mathrm{ub}}]$, where the lower bound and upper bound are
\begin{align}\label{IV:eq2}
    F_{\mathrm{lb}}&=\mathbb{E}\left[\tilde{F}_{e}(\vr)\right]-\frac{st_{\alpha,N-1}}{\sqrt{N}},\\
    F_{\mathrm{ub}}&=\mathbb{E}\left[\tilde{F}_{e}(\vr)\right]+\frac{st_{\alpha,N-1}}{\sqrt{N}}.
\end{align}
Here $\mathbb{E}[\tilde{F}_{e}(\vr)]$ and $s$ are the sample mean and sample standard deviation of the data set $\tilde{F}_{e}(\vr)$, respectively. The positive factor $t_{\alpha,N-1}$ is the upper percentage point of the $t$-distribution with $N-1$ degrees of freedom~\cite{student1908probable} and the confidence level (CL) $\alpha\in(0,1)$. See details in Appendix~\ref{AppendixC}. Note that $\mathbb{E}[\tilde{F}_{e}(\vr)]$ is an unbiased estimator of $F(\vr)$ shown in Appendix~\ref{AppendixB}. Finally, we use the lower bound $F_{\mathrm{lb}}$ as an estimator of $F(\vr)$ and thus obtain a $\mathrm{SN}$ estimator $\mu_{\mathrm{est}}$ based on Eq.~(\ref{II:eq3}) by replacing $F(\vr)$ with $F_{\mathrm{lb}}$.

The basic idea of our algorithm is that we treat the fidelity $F(\vr)$ as an unknown mean of a population. After sampling random unitaries and orthogonal matrices, we estimate the confidence interval of the unknown mean by using statistical methods~\cite{montgomery2010applied}. We have several remarks on the algorithm. First, confidence intervals are random quantities varying from sample to sample. Whether the constructed confidence interval covers the true fidelity $F(\vr)$ or not depends on the CL. A CL $\alpha$ implies that $\alpha\times100\%$ confidence intervals contain $F(\vr)$ by repeated sampling. On the other hand, a trade-off exists between the CL and the estimated precision of the $\mathrm{SN}$. A higher CL induces a wider confidence interval of the fidelity $F(\vr)$, thus corresponding to a smaller lower bound of the $\mathrm{SN}$ for a fixed $N$. For estimating the $\mathrm{SN}$, a higher CL is desired since we need to ensure that the obtained confidence interval covers the fidelity $F_{\mathrm{lb}}$ with greater certainty.

Moreover, when the number of samples is sufficiently large (often empirically taken as $N\geq30$), the algorithm is valid based on the central limit theorem~\cite{montgomery2010applied}. For the small sample size, using the $t$-distribution to construct a confidence interval assumes that the unknown population is normal. However, the $t$ distribution-based confidence interval is relatively robust to this assumption~\cite{montgomery2010applied}. See examples in Appendix~\ref{AppendixD}.

\textit{Entanglement with depolarizing noise}---To test the $\mathrm{SN}$ estimation algorithm, we suppose that the device produces the maximally entangled state $|\phi_d^{+}\rangle$ but under the depolarizing noise~\cite{nielsen2010quantum},
\begin{align}
    \vr_{\mathrm{iso}}^{v}&=v|\phi_d^+\rangle\langle\phi_d^+|+[(1-v)/d^2]\eins_d^{\otimes2},
\end{align}
where the parameter $v\in[0,1]$. The fidelities of the state $\vr_{\mathrm{iso}}^{v}$ is $F(\vr_{\mathrm{iso}}^{v})=v+(1-v)/d^2$ and its Schmidt number is at least $\mu+1$ if and only if $v>(\mu d-1)/(d^2-1)$~\cite{terhal2000schmidt}.

\begin{figure}[]
    \centering
    \includegraphics[scale=0.47]{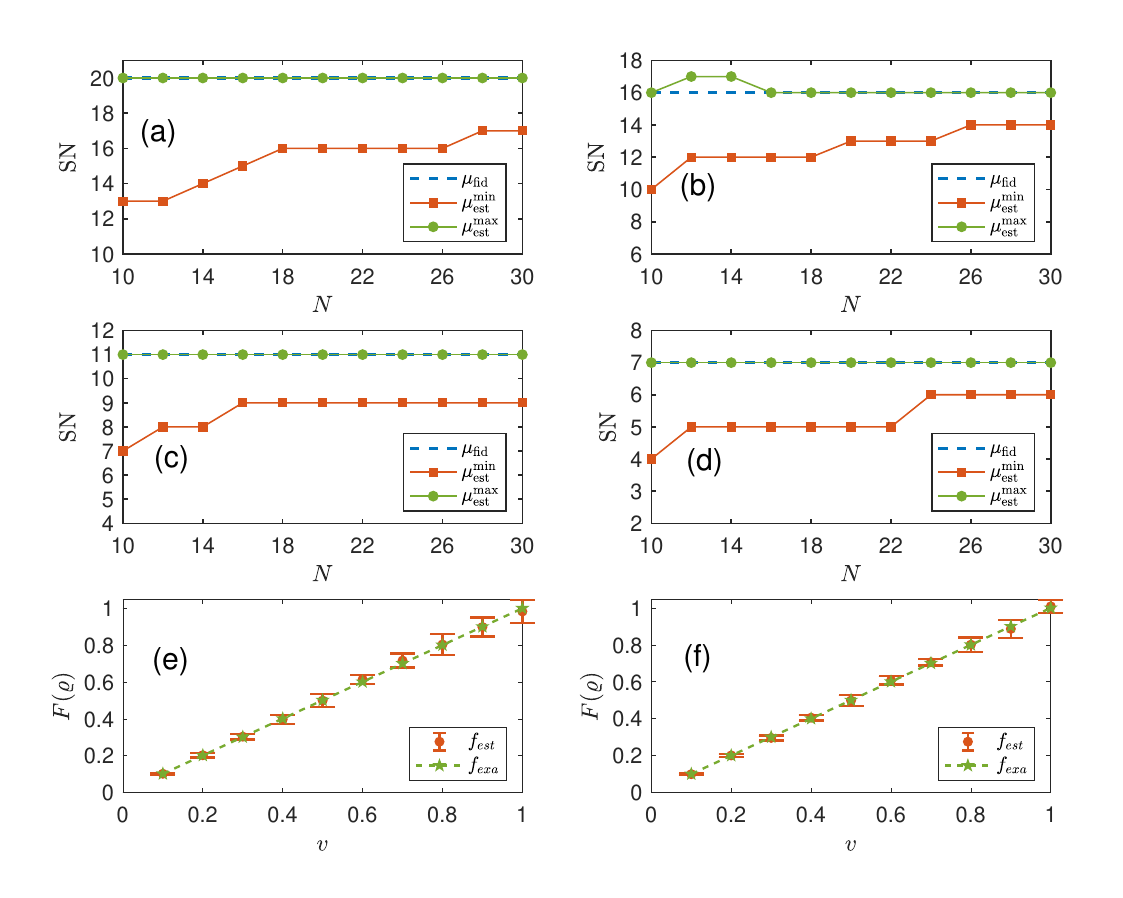}
    \caption{(a-d) shows the distribution of estimated $\mathrm{SN}$ lower bound $\mu_{\mathrm{est}}$ of the state $\vr_{\mathrm{iso}}^v$ with $d=20$ as a function of the number of $N$, where (a) $v=0.95$, (b) $v=0.77$, (c) $v=0.52$, and (d) $v=0.3$. For each $N$, the expectation values of product observables are exactly calculated, and the CL is $99.9\%$. Running the $\mathrm{SN}$ estimation algorithm $N_{\mathrm{iter}}=500$ times, we collect $N_{\mathrm{iter}}$ estimators, $\{\mu_{\mathrm{est}}^{(l)}\}_{l=1}^{N_{\mathrm{iter}}}$, in which the minimal and maximal values are denoted as $\mu_{\mathrm{est}}^{\min}$ and $\mu_{\mathrm{est}}^{\max}$. $\mu_{\mathrm{fid}}$ denotes the $\mathrm{SN}$ estimated by the fidelity-based witness. (e,f) shows the error bar of the fidelity estimation for the state $\vr_{\mathrm{iso}}^{v}$ with CL $99.9\%$ and $N=30$ random operations. (e) $d=20$ and (f) $d=30$. $f_{\mathrm{est}}$ and $f_{\mathrm{exa}}$ denote the estimated and exact fidelities. The expectation values are calculated exactly.}
    \label{Fig2}
\end{figure}

In Fig.~\ref{Fig2}(a-d), we present numerical results and infer the relation between the estimated $\mathrm{SN}$ $\mu_{\mathrm{est}}$ and the number of random operations $N$ with CL $99.9\%$. Iterative running the $\mathrm{SN}$ estimation algorithm $N_{\mathrm{iter}}=500$ times for a fixed $N$, we obtain $N_{\mathrm{iter}}$ estimators, $\{\mu_{\mathrm{est}}^{(l)}\}_{l=1}^{N_{\mathrm{iter}}}$. Define the maximal and minimal estimation error for a fixed $N$ as
\begin{align}
    \mathcal{E}_{\max}(N)=\mu_{\mathrm{fid}}-\mu_{\mathrm{est}}^{\min},
    \mathcal{E}_{\min}(N)=\mu_{\mathrm{fid}}-\mu_{\mathrm{est}}^{\max},
\end{align}
where $\mu_{\mathrm{est}}^{\min}$ and $\mu_{\mathrm{est}}^{\max}$ are the minimal and maximal value of the data set $\{\mu_{\mathrm{est}}^{(l)}\}_{l=1}^{N_{\mathrm{iter}}}$, and $\mu_{\mathrm{fid}}$ is the $\mathrm{SN}$ estimated by the fidelity-based witness. Numerical analysis shows that $\mathcal{E}_{\max}(N)$ is nearly monotone as the increase of $N$, such as $\mathcal{E}_{\max}(10)=7$ and $\mathcal{E}_{\max}(30)=3$ from Fig.~\ref{Fig2}(a). It can be found that the best estimation error $\mathcal{E}_{\min}(N)=0$ for $N\in[16,30]$. Moreover, $\mathcal{E}_{\max}(30)=3,2,2,1$ for $v=0.95,0.77,0.52,0.3$, which indicates the higher reliability of the estimation protocol under high-noise conditions. Fig.~\ref{Fig2}(e,f) further shows this point by observing that the error bars of the fidelity estimate decrease with increasing noise parameter $v$.

\begin{figure}[]
    \centering
    \includegraphics[scale=0.55]{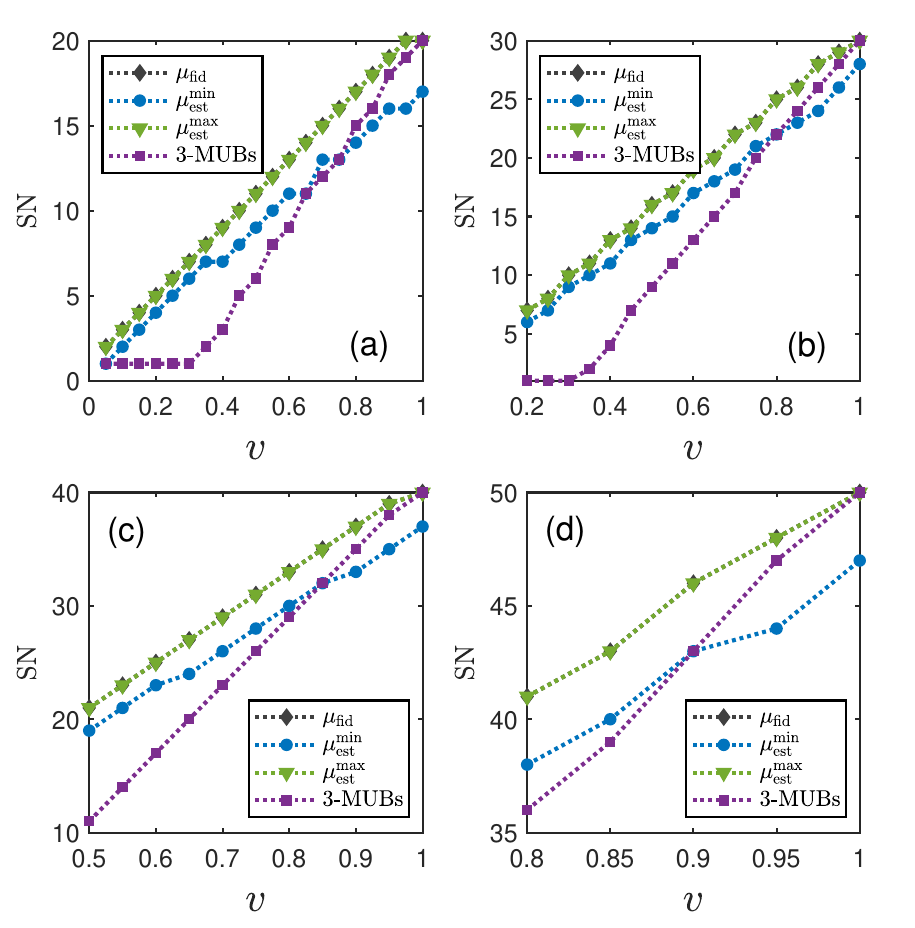}
    \caption{Results for the states $\vr_{\mathrm{iso}}^v$ with (a) $d=20$, (b) $d=30$, (c) $d=40$, and $d=50$ by fixing $N=30$ and the CL $99.9\%$. After performing the $\mathrm{SN}$ estimation algorithm $N_{\mathrm{iter}}=500$ times, we collect $N_{\mathrm{iter}}$ estimators, $\{\mu_{\mathrm{est}}^{(l)}\}_{l=1}^{N_{\mathrm{iter}}}$, in which the minimal and maximal values are denoted as $\mu_{\mathrm{est}}^{\min}$ and $\mu_{\mathrm{est}}^{\max}$. $\mu_{\mathrm{fid}}$ denotes the $\mathrm{SN}$ estimated by the fidelity-based witness. $3$-MUBs denote the results of the criterion by using $3$ MUBs~\cite{morelli2023resource}. $\mu_{\mathrm{fid}}$ is the $\mathrm{SN}$ estimated by the fidelity-based witness. The expectation values are calculated exactly.}
    \label{Fig3}
\end{figure}

Next, we evaluate the performance of the $\mathrm{SN}$ estimation algorithm by fixing $N=30$. Fig.~\ref{Fig3} displays results for the state $\vr_{\mathrm{iso}}^v$ with local dimensions $d=20,30,40,50$. In the worst-case scenario, as shown by the blue dots, our approach provides a more accurate estimation than the $3$-MUBs criterion for large noise strengths, while the $3$-MUBs criterion performs better for low noise strengths. The underlying mechanism is that the sampling complexity is proportional to the state's purity. Consequently, a smaller level of purity thereby allows for a more precise estimation of the fidelity. With the increase of $d$, our approach's noise parameter threshold of vanishing advantage would increase. For example, the threshold $v=0.75,0.8,0.85,0.9$ for $d=20,30,40,50$, respectively. This feature implies that the advantage of our approach becomes more evident as the local dimensions increase.
    
Fig.~\ref{Fig3} implies that a constant $N=30$ and then $2\times30=60$ projections suffice for $d=20,30,40,50$, which is consistent with the theoretical prediction. As a comparison, $3$-MUBs criterion requires $3d$ projections. Thus, our method employs fewer operations but obtains more accurate estimations for states with low purities. Moreover, as shown in Figs.~\ref{Fig3}(a), the optimal second moments criterion detects entanglement in the state $\vr_{\mathrm{iso}}^v$ for $v\geq1/\sqrt{21}\approx0.2182$~\cite{imai2021bound} while our approach achieves stronger detection for $v\geq0.1$. Additional numerical results, including the noisy two-qudit purified thermal states~\cite{li2025high}, the state $|\phi_d^+\rangle\langle\phi_d^+|$ under a random noise channel, and cases for a finite number of measurements per random operation, are provided in Appendix~\ref{AppendixD}. 

\textit{Conclusion and outlooks}---We have introduced a criterion and a classical post-processing algorithm to estimate the Schmidt number of bipartite high-dimensional states by considering the practical experimental challenges. The proposed method offers several significant advantages. First, various experimental platforms can randomly select projections that align with their specific device, as long as the random operations satisfy the properties of a unitary or orthogonal $2$-design. In contrast, the MUBs criterion requires experimental platforms to measure the state on MUBs, which is challenging for practical physical platforms such as cold-atom systems~\cite{euler2023detecting}.

Second, our results rely on the first-order measurement information of observables and require a $2$-design of unitary and orthogonal matrices. Existing high-order randomized measurements~\cite{wyderka2023probing,liu2023characterizing} involve unitary $4$-design, which construction is a challenging task~\cite{bannai2022explicit}. Finally, our criterion only requires measuring a single measurement basis, reducing the required experimental controls to a single channel independent of the local dimension $d$. Thus, our approach applies to many platforms, including integrated-optics platforms~\cite{schaeff2015experimental}, and is also practical for detecting time-bin entangled states~\cite{martin2017quantifying}.

There are several directions for future research. First, it would be interesting to generalize our work to other entanglement witnesses beyond the maximally entangled state. Next, one may study the performance of the proposed classical estimation method for high-order moments in randomized measurements~\cite{liu2022characterizing,wyderka2023complete,cieslinski2024analysing,brydges2019probing,elben2020mixed,yu2021optimal}, see generalizations in Appendix~\ref{AppendixE} and~\ref{AppendixF}. Finally, one may study the potential of our approach to the multipartite case~\cite{cobucci2024detecting}. Our results would also make it possible to experimentally characterize the entanglement of high-dimensional quantum channels~\cite{engineer2025certifying}.

The data that support the findings of this study are openly available at the following URL/DOI: https://doi.org/10.5281/zenodo.17798832.

\textit{Acknowledgment.---}We thank Armin Tavakoli, Simon Morelli, and Marcus Huber for their insightful discussions and suggestions. This work was supported by Beijing Natural Science Foundation (Grant No. Z240007), National Natural Science Foundation of China (Grants No. 12125402, No. 12534016, No. 12347152, No. 12405005, No. 12405006, No. 12075159, and No. 12171044), Quantum Science and Technology-National Science and Technology Major Project (Grant No. 2024ZD0302401 and No. 2021ZD0301500), Research Foundation for Youth Scholars of Beijing Technology and Business University, the Postdoctoral Fellowship Program of CPSF (No. GZC20230103), the China Postdoctoral Science Foundation (No. 2023M740118 and 2023M740119), the specific research fund of the Innovation Platform for Academicians of Hainan Province, the Research Foundation for Youth Scholars of Beijing Technology and Business University.

\bibliography{refs}

\appendix
\onecolumngrid
\addtocounter{observation}{-1}
\setcounter{figure}{0}
\renewcommand{\figurename}{SFigure}
\bigskip
In this Appendix, we give more details on the theoretical results of the paper. In Appendix~\ref{AppendixA}, we show the proof of Observation 1 in the main text. Appendix~\ref{AppendixB} shows the proof of the variance of the estimator for the entanglement fidelity. In Appendix~\ref{AppendixC}, we present information about the confidence interval estimation of the fidelity and the $t$-distribution. Moreover, we present an enhanced Schmidt number algorithm based on the Bootstrap resample technique~\cite{efron1992bootstrap,efron1982jackknife,efron1994introduction}. This algorithm is suitable for small samples and does not require the assumption of a population distribution. Appendix~\ref{AppendixD} exhibits additional numerical results for different high-dimensional entangled states. In Appendix~\ref{AppendixE}, we use the first-order moments to estimate a lower bound of the well-known entanglement measure. We show a randomized symmetric projection in Appendix~\ref{AppendixF} to estimate the Schmidt number, which can be seen as a randomized version of the symmetric projection introduced in Ref.~\cite{morelli2023resource}.

\section{Proof of the observation}\label{AppendixA}
Before proving the observation in the main text, we present a useful corollary showing how to calculate Haar integrals over the unitary and orthogonal groups. The proof of Corollary~\ref {Corollary} can be found in Refs. \cite{mele2024introduction,hashagen2018real,garcia2025quantum,liang2025real}.
\begin{corollary}\label{Corollary}
    Given an bipartite operator $\mathcal{A}\in\mathcal{L}(\mathds{C}^{d}\otimes\mathds{C}^{d})$, we have
    \begin{align}
        \int dU \, U^{\otimes2}\mathcal{A}U^{\dag\otimes2}&=a_{1}\eins_d\otimes\eins_d+a_{2}\mathbb{S}\label{A:eq1},\\
        \int dO \, O^{\otimes2}\mathcal{A}O^{\top\otimes2}&=c_{1}\eins_d\otimes\eins_d+c_{2}\mathbb{S}+c_{3}\mathbb{S}^{\top_B},\label{Ac1:eq2}
    \end{align}
    where the coefficients are
    \begin{align}
        \begin{cases}
            a_{1}=\frac{\tr(\mathcal{A})-d^{-1}\tr(\mathbb{S}\mathcal{A})}{d^2-1}\\
            a_{2}=\frac{\tr(\mathbb{S}\mathcal{A})-d^{-1}\tr(\mathcal{A})}{d^2-1}
        \end{cases}~\mathrm{and}~
        \begin{cases}
            c_{1}=\frac{(d+1)\tr(\mathcal{A})-\tr(\mathbb{S}\mathcal{A})-\tr(\mathbb{S}^{\top_B}\mathcal{A})}{d(d-1)(d+2)}\\
            c_{2}=\frac{-\tr(\mathcal{A})+(d+1)\tr(\mathbb{S}\mathcal{A})-\tr(\mathbb{S}^{\top_B}\mathcal{A})}{d(d-1)(d+2)}\\
            c_{3}=\frac{-\tr(\mathcal{A})-\tr(\mathbb{S}\mathcal{A})+(d+1)\tr(\mathbb{S}^{\top_B}\mathcal{A})}{d(d-1)(d+2)}
        \end{cases}
    \end{align}
\end{corollary}

Now, we prove Observation 1 in the main text.
\begin{observation}
    For any bipartite state $\vr$ of equal dimension $d$, the fidelity $F(\vr)$ is given by
    \begin{align}
        F(\vr)=(d+2)\mathcal{Q}(\vr)-(d+1)\mathcal{R}(\vr).
    \end{align}
    The Schmidt number of $\vr$ is lower bounded by
    \begin{align}
        \mu=
        \begin{cases}
            1, & F(\vr)=0,\\
            \lceil dF(\vr)\rceil, & F(\vr)\in(0,1],\\
        \end{cases}
    \end{align}
    where the $\lceil\cdot\rceil$ is ceil function.
\end{observation}
\begin{proof}
Given rank-1 observables $\mathcal{M}=|j\rangle\langle j|$, we have
\begin{align}
    \tr(\mathcal{M})=\tr(\mathcal{M}^{\otimes2 })
    =\tr(\mathbb{S}\mathcal{M}^{\otimes2 })
    =\tr(\mathbb{S}^{\top_B}\mathcal{M}^{\otimes2 })
    =\tr(\mathcal{M}^2)=1.
\end{align}
where $|j\rangle$ are computational basis of a $d$ dimension system. The moment $\mathcal{R}(\vr)$ with observable $\mathcal{M}$ is given by
\begin{align}
    \mathcal{R}(\vr)&=\int dU\tr[(U\otimes U)\vr(U^{\dag}\otimes U^{\dag})(\mathcal{M}\otimes \mathcal{M})]
    =\int dU\tr[\vr(U^{\dag}\otimes U^{\dag})(\mathcal{M}\otimes \mathcal{M})(U\otimes U)]\nonumber\\
    &=\tr\left[\vr\int dU(U\otimes U)(\mathcal{M}\otimes \mathcal{M})(U^{\dag}\otimes U^{\dag})\right]
    =\frac{\tr\left[\vr(\eins_d\otimes\eins_d+\mathbb{S})\right]}{d(d+1)}
    =\frac{1+\tr(\vr\mathbb{S})}{d(d+1)}\label{Aob1:eq2},
\end{align}
where the fourth equation holds since Eq.~(\ref{A:eq1}) in corollary~\ref{Corollary}.

The  moment $\mathcal{Q}(\vr)$ with observable $\mathcal{M}$ is given by
\begin{align}
    \mathcal{Q}&=\int dO\tr[(O\otimes O)\vr(O^{\top}\otimes O^{\top})(\mathcal{M}\otimes \mathcal{M})]
    =\tr\left[\vr\int dU(O^{\top}\otimes O^{\top})(\mathcal{M}\otimes \mathcal{M})(O\otimes O)\right]\nonumber\\
    &=\tr\left[\vr\int dU(O\otimes O)(\mathcal{M}\otimes \mathcal{M})(O^{\top}\otimes O^{\top})\right]
    =\frac{\tr\left[\vr\left(\eins_d\otimes\eins_d+\mathbb{S}+\mathbb{S}^{\top_B}\right)\right]}{d(d+2)}
    =\frac{1+\tr\left(\vr\mathbb{S}\right)+\tr\left(\vr\mathbb{S}^{\top_B}\right)}{d(d+2)}\label{Aob1:eq3},
\end{align}
where the last equation is true by using Eq.~(\ref{Ac1:eq2}) of corollary~\ref{Corollary}.
		
Based on Eqs.~(\ref{Aob1:eq2},\ref{Aob1:eq3}), we obtain
\begin{align}\label{Aob1:eq5}
    \tr(\vr\mathbb{S}^{\top_B})&=d(d+2)\mathcal{Q}(\vr)-d(d+1)\mathcal{R}(\vr).
\end{align}
		
Notice that the fidelity-based witness is $\mathcal{W}_{\mu}=\mu\eins_d^{\otimes2}/d-|\phi_d^{+}\rangle\langle\phi_d^{+}|$, where $|\phi_d^{+}\rangle=\sum_{j=0}^{d-1}|jj\rangle/\sqrt{d}$. For any state $\vr$ with Schmidt number $\leq\mu$, we have $\tr(\mathcal{W}_{\mu}\vr)\geq0$. Recall that $|\phi_d^{+}\rangle$ can be expressed as
\begin{align}
    |\phi_d^{+}\rangle\langle\phi_d^{+}|
    =\frac{1}{d}\sum_{j,k=0}^{d-1}|j\rangle\langle k|\otimes|j\rangle\langle k|
    =\frac{1}{d}\sum_{j,k=0}^{d-1}|j\rangle\langle k|\otimes(|k\rangle\langle j|)^{\top}
    =\frac{1}{d}\mathbb{S}^{\top_B}.
\end{align}
As a result, the fidelity for any state $\vr$ is
\begin{align}
    F(\vr)=\tr(\vr|\phi_d^{+}\rangle\langle\phi_d^{+}|)
    =\frac{1}{d}\tr(\vr\mathbb{S}^{\top_B})
    =(d+2)\mathcal{Q}(\vr)-(d+1)\mathcal{R}(\vr),
\end{align}
which is the first result of observation 1.

For any state $\vr$ with Schmidt number $\mu$,
\begin{align}
    \tr(\mathcal{W}_{\mu}\vr)&=\frac{\mu-\tr(\vr\mathbb{S}^{\top_B})}{d}
    =\frac{\mu}{d}-F(\vr)\geq0.
\end{align}
The above inequality also implies that for any state $\vr$ with Schmidt number $\mu$, we have
\begin{align}
    dF(\vr)\leq\mu.
\end{align}
Thus, we finish the proof.
\end{proof}

From the proof, the choice of observable depends on the moments $\mathcal{R}(\vr)$ and $\mathcal{Q}(\vr)$ which are used to estimate $\tr[\vr\mathbb{S}]$ and $\tr[\vr\mathbb{S}^{\top_B}]$, respectively. Hence, the observable must be selected satisfying $a_2\neq0$ and $c_3\neq0$ in Eqs.(\ref{A:eq1},\ref{Ac1:eq2}). Without loss of generality, we assume a diagonal observable of the form $\mathcal{M}=\mathrm{diag}(m_1,m_2,\cdots,m_r,0,\cdots,0)$ with rank $r$ and nonzero real numbers $m_1,\cdots,m_r$. These parameters can be chosen such that the conditions $a_2\neq0$ and $c_3\neq0$ hold. In practice, however, we prefer the simplest structure that satisfies the requirements, thereby easing experimental implementation. It turns out that a rank-one projection $\mathcal{M}=|j\rangle\langle j|$ suffices, where $\{|j\rangle\}_{j=0}^{d-1}$ is the computational basis in a $d$-dimensional Hilbert space. In this case, the coefficients reduce to
\begin{align}
    a_1=a_2=\frac{1}{d(d+1)},c_1=c_2=c_3=\frac{1}{d(d+2)},
\end{align}
since $\tr(\mathcal{M}^{\otimes2})=\tr(\mathbb{S}\mathcal{M}^{\otimes2})=\tr(\mathbb{S}^{\top_B}\mathcal{M}^{\otimes2})=1$.
In summary, fidelity estimation can be achieved using only a rank-one projection. This choice not only fulfills the theoretical requirements but also offers a more experimentally friendly scheme.

\section{Variance estimation}\label{AppendixB}
The fidelity is estimated via
\begin{align}
    F_{e}(\vr)=(d+2)\mathcal{Q}_e(\vr)-(d+1)\mathcal{R}_e(\vr),
\end{align}
where the estimated moments $\mathcal{R}_e(\vr)=\mathbb{E}[\mathcal{S}_U]$ and $\mathcal{Q}_e(\vr)=\mathbb{E}[\mathcal{S}_O]$ are defined as the mean values of $\mathcal{S}_U$ and $\mathcal{S}_O$, respectively. Equivalently, we obtain an estimator $F_{e}^{(l)}(\vr)=(d+2)E_O^{(l)}(\vr)-(d+1)E_U^{(l)}(\vr)$, where
\begin{align}
    E_U^{(l)}(\vr)=\tr[\vr U_l^{\otimes2}\mathcal{M}^{\otimes2 }U_l^{\dag\otimes2}],
    E_O^{(l)}(\vr)=\tr[\vr O_l^{\otimes2}\mathcal{M}^{\otimes2 }O_l^{\top\otimes2}].
\end{align}
So, the variance of $F_{e}^{(l)}(\vr)$ depends on the variance of $E_U^{(l)}(\vr)$ and $E_O^{(l)}(\vr)$, such as
\begin{align}
    \mathrm{Var}\left[F_{e}^{(l)}(\vr)\right]&=(d+2)^2\mathrm{Var}\left[E_O^{(l)}(\vr)\right]+(d+1)^2\mathrm{Var}\left[E_U^{(l)}(\vr)\right].
\end{align}
The following two subsections demonstrate that the variance of the estimator $F_{e}(\vr)$ is $\mathrm{Var}\left[F_{e}^{(l)}(\vr)\right]=\mathcal{O}(d^2)$ scaling as $\mathcal{O}(d^2)$ in the worst case. This result relies on the variance bound $\mathrm{Var}\left[E_U^{(l)}(\vr)\right]=\mathrm{Var}\left[E_O^{(l)}(\vr)\right]=\mathcal{O}(1)\tr(\vr^2)$ which is established in next two subsections. In contrast, for two specific states $\vr_{\mathrm{iso}}^{v},\vr_{\mathrm{deph}}^{v}$, we find that the variance $\mathrm{Var}\left[F_{e}^{(l)}(\vr)\right]=\mathcal{O}(1)$, and thus becomes independent of the local dimension $d$. This result is proved in the final subsection.

If $N$ experimental data are available, $F_{e}^{(1)}(\vr),F_{e}^{(2)}(\vr),\cdots,F_{e}^{(N)}(\vr)$, the unbiased estimator can be constructed from the empirical mean $\hat{F}_e(\vr)=\frac{1}{N}\sum_{l=1}^{N}F_{e}^{(l)}(\vr)$. The estimator $\hat{F}_e(\vr)$ is an unbiased estimator since
\begin{align}
    \mathbb{E}\left[\hat{F}_e(\vr)\right]=\mathbb{E}\left[\frac{1}{N}\sum_{l=1}^{N}F_{e}^{(l)}(\vr)\right]
    =\frac{1}{N}\sum_{l=1}^{N}\mathbb{E}\left[F_{e}^{(l)}(\vr)\right]
    =F(\vr),
\end{align}
where the last equation holds due to the following equation
\begin{align}
    \mathbb{E}\left[F_{e}^{(l)}(\vr)\right]&=(d+2)\mathbb{E}\left[E_O^{(l)}(\vr)\right]-(d+1)\mathbb{E}\left[E_U^{(l)}(\vr)\right]\nonumber\\
    &=(d+2)\mathbb{E}\left[\tr[\vr O_l^{\otimes2}\mathcal{M}^{\otimes2 }O_l^{\top\otimes2}]\right]-(d+1)\mathbb{E}\left[\tr[\vr U_l^{\otimes2}\mathcal{M}^{\otimes2 }U_l^{\dag\otimes2}]\right]\nonumber\\
    &=(d+2)\int dO\tr[\vr O_l^{\otimes2}\mathcal{M}^{\otimes2 }O_l^{\top\otimes2}]-(d+1)\int dU\tr[\vr U_l^{\otimes2}\mathcal{M}^{\otimes2 }U^{\dag\otimes2}]\nonumber\\
    &=(d+2)\mathcal{Q}(\vr)-(d+1)\mathcal{R}(\vr)\nonumber\\
    &=F(\vr).
\end{align}
The mean square error of the estimator $\hat{F}_e(\vr)$ depends on its variance
\begin{align}
    \mathrm{Var}\left[\hat{F}_e(\vr)\right]=\frac{1}{N^2}\sum_{l=1}^{N}\mathrm{Var}\left[F_{e}^{(l)}(\vr)\right].
\end{align}
Based on Chebyshev's inequality, the failure probability of $|\hat{F}_e(\vr)-\mathbb{E}[\hat{F}_e(\vr)]|\geq\epsilon$ for a precision $\epsilon$ has an upper bound
\begin{align}
    \mathrm{Pr}\left(|\hat{F}_e(\vr)-\mathbb{E}[\hat{F}_e(\vr)]|\geq\epsilon\right)\leq\frac{\mathrm{Var}\left[\hat{F}_e(\vr)\right]}{\epsilon^2}.
\end{align}
To make sure that the failure probability is at most $\delta$, let $\frac{\mathrm{Var}\left[\hat{F}_e(\vr)\right]}{\epsilon^2}\leq\delta$. For the general case, the variance $\mathrm{Var}\left[\hat{F}_e(\vr)\right]=\frac{\mathcal{O}(d^2)\tr(\vr^2)}{N}$ and thus the number of samples $N=\frac{\mathcal{O}(d^2)\tr(\vr^2)}{\delta\epsilon^2}$, which depends on local dimension $d$. However, for two special cases $\vr_{\mathrm{iso}}^{v},\vr_{\mathrm{deph}}^{v}$, we have $\mathrm{Var}\left[\hat{F}_e(\vr)\right]=\frac{\mathcal{O}(1)\tr(\vr^2)}{N}$ and thus the number of samples $N=\frac{\mathcal{O}(1)\tr(\vr^2)}{\delta\epsilon^2}$, which is independent of on local dimension $d$.

\subsection{Variance of unitary item}
To estimate the variance of the unitary item $E_U^{(l)}(\vr)$, we first decompose the state $\vr$ in the generalized Gell-Mann matrices basis $\{\lambda_j\}_{j=0}^{d^2-1}$~\cite{kimura2003bloch,bertlmann2008bloch} such as $\vr=\frac{1}{d^2}\sum_{j,k=0}^{d^2-1}T_{jk}\lambda_j\otimes\lambda_k$. Here, the operators $\lambda_j$ satisfying $\tr(\lambda_j)=0$ for $j=1,\cdots,d^2-1$ and $\tr(\lambda_j\lambda_k)=d\delta_{jk}$. Then, the variance of $E_U^{(l)}(\vr)$ can be expressed as
\begin{align}
    \mathrm{Var}\left[E_U^{(l)}(\vr)\right]&\leq\mathbb{E}\left[\left(E_U^{(l)}(\vr)\right)^2\right]\nonumber\\
    &=\mathbb{E}\left[\tr\left(\vr^{\otimes2} U_l^{\otimes4}\mathcal{M}^{\otimes4}U_l^{\dag\otimes4}\right)\right]\nonumber\\
    &=\tr\left[\vr^{\otimes2} \int d UU^{\otimes4}\mathcal{M}^{\otimes4}U^{\dag\otimes4}\right]\nonumber\\
    &=\frac{1}{d^4}\tr\left[\sum_{j_1,k_1=0}^{d^2-1}\sum_{j_2,k_2=0}^{d^2-1}T_{j_1k_1}T_{j_2k_2}\lambda_{j_1}\otimes\lambda_{k_1}\otimes\lambda_{j_2}\otimes\lambda_{k_2} \int d UU^{\otimes4}\mathcal{M}^{\otimes4}U^{\dag\otimes4}\right].
\end{align}
Based on the Schur-Weyl duality of the unitary group~\cite{goodman2000representations,collins2006integration,gross2007evenly,dankert2009exact,zhang2014matrix}, the operator
\begin{align}
    \int d UU^{\otimes4}\mathcal{M}^{\otimes4}U^{\dag\otimes4}
    &=\frac{24}{\Delta}C_{(4)},\Delta=d(d+1)(d+2)(d+3),
\end{align}
where the operator~\cite{zhang2014matrix}
\begin{align}
    C_{(4)}&=\frac{1}{24}P_{(1)}+\frac{1}{24}\left(P_{(12)}+P_{(13)}+P_{(14)}+P_{(23)}+P_{(24)}+P_{(34)}\right)
    +\frac{1}{24}\left(P_{(12)(34)}+P_{(13)(24)}+P_{(14)(23)}\right)\nonumber\\
    &+\frac{1}{24}\left(P_{(123)}+P_{(132)}+P_{(124)}+P_{(142)}+P_{(134)}+P_{(143)}+P_{(243)}+P_{(234)}\right)\nonumber\\
    &+\frac{1}{24}\left(P_{(1234)}+P_{(1243)}+P_{(1324)}+P_{(1342)}+P_{(1423)}+P_{(1432)}\right),
\end{align}
and $P_{\pi}$ denotes the permutation operator of the permutation $\pi$. See Example 3.36. in~\cite{zhang2014matrix}. Thus, the variance
\begin{align}
    \mathrm{Var}\left[E_U^{(l)}(\vr)\right]
    &\leq\frac{1}{d^4\Delta}\tr\left[\sum_{j_1,k_1=0}^{d^2-1}\sum_{j_2,k_2=0}^{d^2-1}T_{j_1k_1}T_{j_2k_2}\lambda_{j_1}\otimes\lambda_{k_1}\otimes\lambda_{j_2}\otimes\lambda_{k_2} \int d UU^{\otimes4}\mathcal{M}^{\otimes4}U^{\dag\otimes4}\right]\nonumber\\
    &=\frac{1}{d^4\Delta}\sum_{j_1,k_1=0}^{d^2-1}\sum_{j_2,k_2=0}^{d^2-1}T_{j_1k_1}T_{j_2k_2}\left[d^2\delta_{j_1k_1}\delta_{j_2k_2}+d^2\delta_{j_1j_2}\delta_{k_1k_2}+d^2\delta_{j_1k_2}\delta_{k_1j_2}+6\tr(\lambda_{j_1}\lambda_{k_1}\lambda_{j_2}\lambda_{k_2})\right]\nonumber\\
    &=\frac{1}{d^4\Delta}\left[d^2\left(\sum_{j_1=0}^{d^2-1}T_{j_1j_1}\right)^2+d^2\sum_{j_1,k_1=0}^{d^2-1}T_{j_1k_1}^2+d^2\sum_{j_1,k_1=0}^{d^2-1}T_{j_1k_1}T_{k_1j_1}+6\sum_{\substack{j_1,k_1\\j_2,k_2}=0}^{d^2-1}T_{j_1k_1}T_{j_2k_2}\tr(\lambda_{j_1}\lambda_{k_1}\lambda_{j_2}\lambda_{k_2})\right]\nonumber\\
    &\leq\frac{1}{d^4\Delta}\left[d^6\left[\tr(\Phi_1\vr)\right]^2+d^4\tr(\vr^2)+d^4\tr(\vr^2)+6\sum_{\substack{j_1,k_1\\j_2,k_2}=0}^{d^2-1}T_{j_1k_1}T_{j_2k_2}\tr(\lambda_{j_1}\lambda_{k_1}\lambda_{j_2}\lambda_{k_2})\right],
\end{align}
where the second inequality holds since $\sum_{j_1,k_1=0}^{d^2-1}T_{j_1k_1}^2=d^2\tr(\vr^2)$ and
\begin{align}
    \sum_{j_1,k_1=0}^{d^2-1}T_{j_1k_1}T_{k_1j_1}
    \leq\sum_{j_1,k_1=0}^{d^2-1}|T_{j_1k_1}||T_{k_1j_1}|
    \leq\sum_{j_1,k_1=0}^{d^2-1}\frac{|T_{j_1k_1}|^2+|T_{k_1j_1}|^2}{2}
    \leq\sum_{j,k=0}^{d^2-1}T_{jk}^2=d^2\tr(\vr^2).
\end{align}
Here, we introduce a pure state $\Phi_1=\frac{1}{d^2}\sum_{j=0}^{d^2-1}\lambda_j\otimes\lambda_j$ since $\tr(\Phi_1^2)=1$ and the overlap
\begin{align}
    \tr(\Phi_1\vr)=\frac{1}{d^4}\sum_{j=0}^{d^2-1}\sum_{l_1,l_2=0}^{d^2-1}T_{l_1l_2}d\delta_{jl_1}d\delta_{jl_2}
    =\frac{1}{d^2}\sum_{j=0}^{d^2-1}T_{jj},
\end{align}
which implies $\sum_{j=0}^{d^2-1}T_{jj}=d^2\tr(\Phi_1\vr)$. Now, we consider the item
\begin{align}
    \sum_{\substack{j_1,k_1\\j_2,k_2}=0}^{d^2-1}T_{j_1k_1}T_{j_2k_2}\tr(\lambda_{j_1}\lambda_{k_1}\lambda_{j_2}\lambda_{k_2})
    &\leq\sum_{\substack{j_1,k_1\\j_2,k_2}=0}^{d^2-1}\left|T_{j_1k_1}\right|\left|T_{j_2k_2}\right|\left|\tr(\lambda_{j_1}\lambda_{k_1}\lambda_{j_2}\lambda_{k_2})\right|\nonumber\\
    &\leq d^2\sum_{\substack{j_1,k_1\\j_2,k_2}=0}^{d^2-1}\left|T_{j_1k_1}\right|\left|T_{j_2k_2}\right|\nonumber\\
    &=d^2\left(\sum_{j_1,k_1=0}^{d^2-1}\left|T_{j_1k_1}\right|\right)^2\nonumber\\
    &\leq d^2\left(\sum_{j_1,k_1=0}^{d^2-1}1^2\right)\left(\sum_{j_1,k_1=0}^{d^2-1}T_{j_1k_1}^2\right)\nonumber\\
    &=d^2\times d^4\times d^2\tr(\vr^2)\nonumber\\
    &=d^8\tr(\vr^2),
\end{align}
where the first inequality holds since $|\tr(\lambda_{j_1}\lambda_{k_1}\lambda_{j_2}\lambda_{k_2})|\leq\sqrt{\tr(\lambda_{j_1}^2)\tr(\lambda_{k_1}^2)\tr(\lambda_{j_2}^2)\tr(\lambda_{k_2}^2)}=d^2$ and the second inequality uses the Cauchy-Schwarz inequality.

Thus, the variance has an upper bound
\begin{align}
    \mathrm{Var}\left[E_U^{(l)}(\vr)\right]
    &\leq\frac{1}{d^4\Delta}\left[d^6\left[\tr(\Phi_1\vr)\right]^2+d^4\tr(\vr^2)+d^4\tr(\vr^2)+6\sum_{\substack{j_1,k_1\\j_2,k_2}=0}^{d^2-1}T_{j_1k_1}T_{j_2k_2}\tr(\lambda_{j_1}\lambda_{k_1}\lambda_{j_2}\lambda_{k_2})\right]\nonumber\\
    &\leq\frac{1}{d^4\Delta}\left[d^6\left[\tr(\Phi_1\vr)\right]^2+d^4\tr(\vr^2)+d^4\tr(\vr^2)+6d^8\tr(\vr^2)\right]\nonumber\\
    &\leq\frac{d^6\left[\tr(\Phi_1\vr)\right]^2+d^4\tr(\vr^2)+d^4\tr(\vr^2)+6d^8\tr(\vr^2)}{d^5(d+1)(d+2)(d+3)}\nonumber\\
    &=\mathcal{O}(1)\tr(\vr^2).
\end{align}

By examining the scaling of each term in the numerator relative to the denominator $d^5(d+1)(d+2)(d+3)\sim d^8$, the dominant contribution for large $d$ is identified as the term proportional to $6d^8\tr(\vr^2)$. All other terms ($d^6\left[\tr(\Phi_1\vr)\right]^2$ and $d^4\tr(\vr^2)$) are suppressed by factors of $1/d^2$ and $1/d^4$ and vanish as $d\to\infty$. Therefore, the bound simplifies to $\mathcal{O}(1)\tr(\vr^2)$, which implies that the variance of the estimator $E_U^{(l)}(\vr)$ does not grow with the local dimension $d$; it is ultimately controlled only by the purity of the bipartite state.

\subsection{Variance of orthogonal item}
Similarly, the variance of $E_O^{(l)}(\vr)$ is
\begin{align}
    \mathrm{Var}\left[E_O^{(l)}(\vr)\right]&\leq\mathbb{E}\left[\left(E_O^{(l)}(\vr)\right)^2\right]\nonumber\\
    &=\mathbb{E}\left[\tr\left[\vr^{\otimes2} O_l^{\otimes4}\mathcal{M}^{\otimes4}O_l^{\dag\otimes4}\right]\right]\nonumber\\
    &=\tr\left[\vr^{\otimes2} \int d OO^{\otimes4}\mathcal{M}^{\otimes4}O^{\top\otimes4}\right]\nonumber\\
    &=\frac{1}{d^4}\tr\left[\sum_{j_1,k_1=0}^{d^2-1}\sum_{j_2,k_2=0}^{d^2-1}T_{j_1k_1}T_{j_2k_2}\lambda_{j_1}\otimes\lambda_{k_1}\otimes\lambda_{j_2}\otimes\lambda_{k_2} \int d OO^{\otimes4}\mathcal{M}^{\otimes4}O^{\dag\otimes4}\right].
\end{align}
Based on the Schur-Weyl duality for the orthogonal group~\cite{collins2006integration,west2025real,garcia2025quantum}, the operator
\begin{align}
    \int d OO^{\otimes4}\mathcal{M}^{\otimes4}O^{\top\otimes4}
    &=\sum_{x}a_{x}C_{x},
\end{align}
where the operator $C_{x}$ is a combination of the representation $F_{d}$ of the Brauer algebra $\mathcal{B}_4$ and $a_{x}$ are coefficients associated with $\tr(M),\tr(M^2),\tr(M^3),\tr(M^4)$. To derive a bound for the orthogonal item, we note that the coefficients $a_{x}=\mathcal{O}(1/d^4)$ since $\tr(M)=\tr(M^3)=\tr(M^2)=\tr(M^4)=1$~\cite{garcia2025quantum}. Then, the variance of $E_O^{(l)}(\vr)$ has an upper bound
\begin{align}
    \mathrm{Var}\left[E_O^{(l)}(\vr)\right]
    &\leq\frac{1}{d^4}\sum_{x}a_{x}\sum_{j_1,k_1=0}^{d^2-1}\sum_{j_2,k_2=0}^{d^2-1}T_{j_1k_1}T_{j_2k_2}\tr\left[\lambda_{j_1}\otimes\lambda_{k_1}\otimes\lambda_{j_2}\otimes\lambda_{k_2}C_{x}\right],
\end{align}
based on the Supplemental Theorem 6 in Ref.~\cite{garcia2025quantum}. For instance, for operators $A,B,C,D$, the trace is
\begin{eqnarray}
\tr\left[\left(A\otimes B\otimes C\otimes D\right)C_1\right]=
\begin{tabular}{c}
    \includegraphics[scale=0.15]{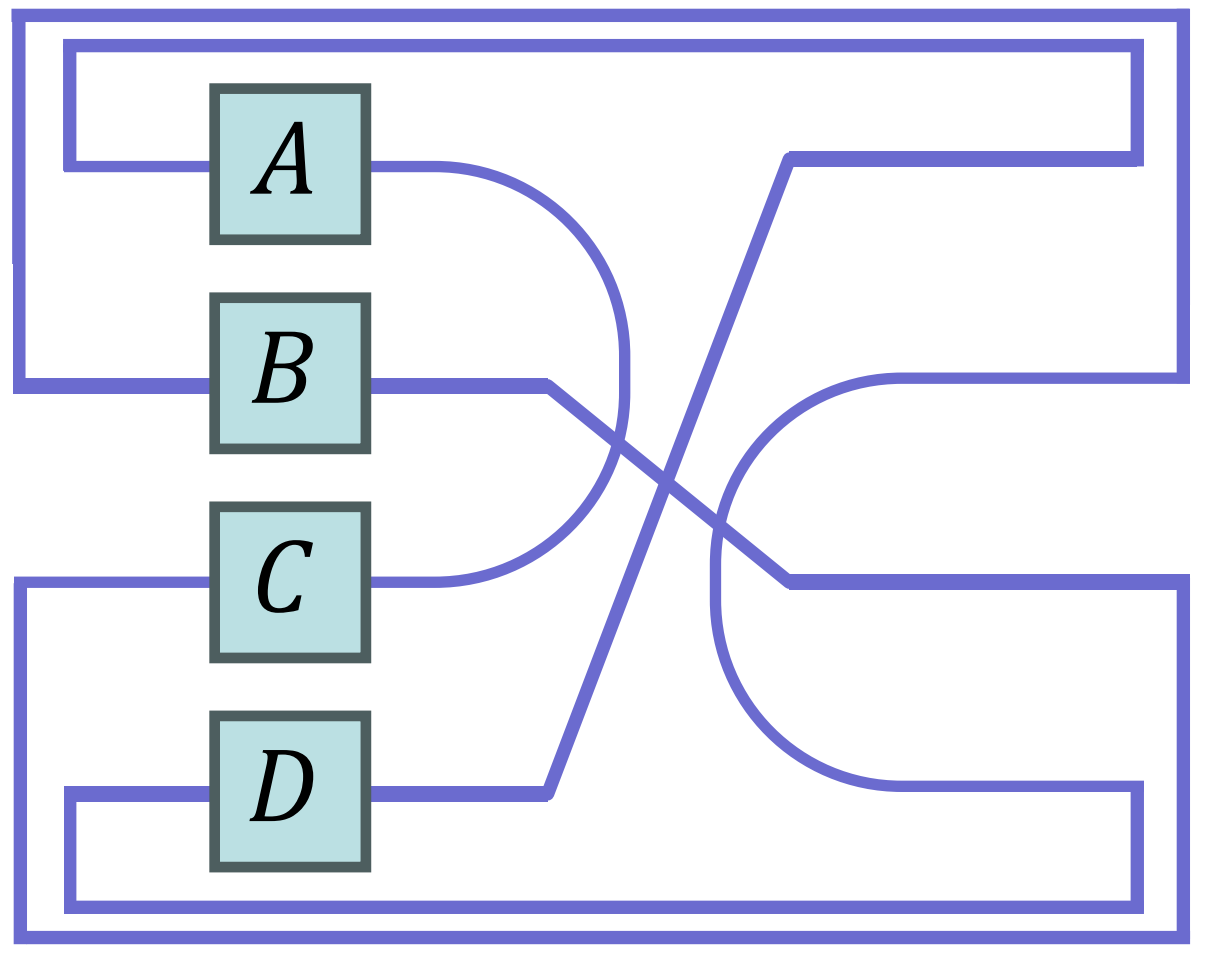}
\end{tabular}
=\tr\left(DBCA^{\top}\right),
\end{eqnarray}
where, the operator $C_1$ can be seen as $\{1,3\}\{2,7\}\{4,5\}\{6,8\}$. Moreover, the number of parameters $a_{x}$ is $\frac{8!}{2^44!}=105$ independent of local dimension $d$. Notice that the quantity $\left|\tr\left[\lambda_{j_1}\otimes\lambda_{k_1}\otimes\lambda_{j_2}\otimes\lambda_{k_2}C_{x}\right]\right|\leq\left|\tr\left(\lambda_{j_1}\lambda_{k_1}\lambda_{j_2}\lambda_{k_2}\right)\right|\leq d^2$. This can be demonstrated using tensor-network diagrams for matrices and operations $C_{x}$. See examples in Supplemental Figure 1 of the work~\cite{garcia2025quantum}. The variance has an upper bound
\begin{align}
    \mathrm{Var}\left[E_O^{(l)}(\vr)\right]
    &\leq\frac{1}{d^4}\sum_{x}a_{x}\sum_{j_1,k_1=0}^{d^2-1}\sum_{j_2,k_2=0}^{d^2-1}T_{j_1k_1}T_{j_2k_2}\tr\left[\lambda_{j_1}\otimes\lambda_{k_1}\otimes\lambda_{j_2}\otimes\lambda_{k_2}C_{x}\right]\nonumber\\
    &\leq\frac{1}{d^4}\sum_{x}a_{x}\sum_{j_1,k_1=0}^{d^2-1}\sum_{j_2,k_2=0}^{d^2-1}|T_{j_1k_1}||T_{j_2k_2}|d^2\nonumber\\
    &\leq\frac{1}{d^4}\sum_{x}a_{x}d^8\tr(\vr^2)\nonumber\\
    &\leq105\tr(\vr^2)=\mathcal{O}(1)\tr(\vr^2),
\end{align}
where the last inequality holds since $a_{x}=\mathcal{O}(1/d^4)$~\cite{garcia2025quantum}.

\subsection{Special cases for different states}
The above bound is the worst case for general states. Here, we consider two important cases:
\begin{align}
    \vr_{\mathrm{iso}}^{v}&=v|\phi_d^+\rangle\langle\phi_d^+|+[(1-v)/d^2]\eins_d^{\otimes2},\nonumber\\
    \vr_{\mathrm{deph}}^{v}&=v|\phi_d^+\rangle\langle\phi_d^+|+[(1-v)/d]\sum_{j=0}^{d-1}|jj\rangle\langle jj|,
\end{align}
where the parameter $v\in[0,1]$.

In particular, the correlation matrices of the states $\vr_{\mathrm{iso}}^{v}$ and $\vr_{\mathrm{deph}}^{v}$ are diagonal matrices. Thus, the following inequality is true,
\begin{align}
    \sum_{\substack{j_1,k_1\\j_2,k_2}=0}^{d^2-1}\left|T_{j_1k_1}\right|\left|T_{j_2k_2}\right|\left|\tr(\lambda_{j_1}\lambda_{k_1}\lambda_{j_2}\lambda_{k_2})\right|
    &=\sum_{j_1,j_2=0}^{d^2-1}\left|T_{j_1j_1}\right|\left|T_{j_2j_2}\right|\left|\tr(\lambda_{j_1}\lambda_{j_1}\lambda_{j_2}\lambda_{j_2})\right|\nonumber\\
    &\leq d^2\sum_{j_1,j_2=0}^{d^2-1}\left|T_{j_1j_1}\right|\left|T_{j_2j_2}\right|\nonumber\\
    &=d^2\left(\sum_{j_1=0}^{d^2-1}\left|T_{j_1j_1}\right|\right)^2\nonumber\\
    &\leq d^2\left(\sum_{j_1=0}^{d^2-1}1^2\right)\left(\sum_{j_1=0}^{d^2-1}T_{j_1j_1}^2\right)\nonumber\\
    &=d^2\times d^2\times d^2\tr(\vr^2)\nonumber\\
    &=d^6\tr\left(\vr^2\right).
\end{align}
The above inequality implies that the variance $\mathrm{Var}\left[E_O^{(l)}(\vr)\right]=\mathrm{Var}\left[E_U^{(l)}(\vr)\right]=\mathcal{O}(1/d^2)\tr(\vr^2)$. Then, the variance of the estimator $F_{e}^{(l)}(\vr)$ reduces to $\mathrm{Var}\left[F_{e}^{(l)}(\vr)\right]=\mathcal{O}(1)$, which is independent of local dimension $d$.

More generally, we consider the random noise state
\begin{align}
    \vr_{\mathrm{rand}}^{v}(\sigma)&=v|\phi_d^+\rangle\langle\phi_d^+|+(1-v)\sigma,
\end{align}
where $v\in[0,1]$ and the random states $\sigma$ is sampled according to the Hilbert-Schmidt metric~\cite{zyczkowski2005average}. From the derivation, we note that the dominate item in the variance estimation for $\vr_{\mathrm{rand}}^{v}(\sigma)$ is the quantity $\mathcal{L}=\sum_{j_1,k_1=0}^{d^2-1}\left|\hat{T}_{j_1k_1}\right|$, where $\hat{T}_{j_1k_1}$ is the correlation matrix of $\sigma$. This claim is true since
\begin{align}
    \sum_{j_1,k_1=0}^{d^2-1}\left|vT_{j_1k_1}+(1-v)\hat{T}_{j_1k_1}\right|
    &\leq\sum_{j_1,k_1=0}^{d^2-1}v\left|T_{j_1k_1}\right|+(1-v)\left|\hat{T}_{j_1k_1}\right|\nonumber\\
    &=v\sum_{j_1,k_1=0}^{d^2-1}\left|T_{j_1k_1}\right|+(1-v)\sum_{j_1,k_1=0}^{d^2-1}\left|\hat{T}_{j_1k_1}\right|\nonumber\\
    &=vd^2+(1-v)\sum_{j_1,k_1=0}^{d^2-1}\left|\hat{T}_{j_1k_1}\right|,
\end{align}
where $T_{j_1k_1}=\{1,-1\}$ is the correlation matrix of the state $|\phi_d^+\rangle\langle\phi_d^+|$.

To see the scaling of $\mathcal{L}$ with local dimension $d$, we numerically simulate $10^3$ random states $\sigma,$ and we find that $\mathcal{L}\leq d^2$. This bound implies the variance of the state $\vr_{\mathrm{rand}}^{v}(\sigma)$ is also $\mathcal{O}(1)$ independent of local dimension $d$.

\subsection{Variance estimation with finite samples}
In the previous section, we assumed the number of samples for estimating the expectation value of observables is infinite. However, this assumption is impossible because practical cases always obtain a finite number of samples to estimate the expectation values. In a practical situation, a finite number of samples is available.

Now, to estimate the fidelity $F(\vr)$, the total number of measurements is denoted as $N_{\mathrm{tot}}=N\times K$. Here, $N$ is the number of random unitaries or orthogonal matrices, and $K$ is the number of measurements for a fixed unitary or orthogonal matrix. For the $l$-th unitary/orthogonal, the expectation value $E_U^{(l)}(\vr)=\tr\left[\vr U^{\otimes2}\mathcal{M}^{\otimes2}U^{\dag\otimes2}\right]$ is estimated on a quantum device. To do this, we measure the expectation value $\mathcal{M}^{\otimes2}$ on $K$ copies of $\vr$ and record all $K$ measurement outcome $X_1^{(l)},\cdots,X_{K}^{(l)}$. We construct an unbiased estimation of the expectation value $E_U^{(l)}(\vr)$ as $\hat{X}^{(l)}=\frac{1}{K}\sum_{k=1}^{K}X_k^{(l)}$ since
\begin{align}
    \mathbb{E}_{X}\left[\hat{X}^{(l)}\right]
    =\mathbb{E}_{X}\left[\frac{1}{K}\sum_{k=1}^{K}X_k^{(l)}\right]
    =\frac{1}{K}\sum_{k=1}^{K}\mathbb{E}_{X}\left[X_k^{(l)}\right]
    =\frac{1}{K}\sum_{k=1}^{K}E_U^{(l)}(\vr)
    =E_U^{(l)}(\vr).
\end{align}
The variance of the estimator $\hat{X}^{(l)}$ is
\begin{align}
    \mathrm{Var}\left[\hat{X}^{(l)}\right]
    &=\frac{1}{K^2}\sum_{k=1}^{K}\mathrm{Var}\left[X_{k}^{(l)}\right]
    =\frac{KE_U^{(l)}(\vr)\left[1-E_U^{(l)}(\vr)\right]}{K^2}
    =\frac{E_U^{(l)}(\vr)\left[1-E_U^{(l)}(\vr)\right]}{K}.
\end{align}

The unbiased estimator for the moment $\mathcal{R}(\vr)$ can be given by
\begin{align}
    \hat{\mathcal{R}}=\frac{1}{N}\sum_{i=1}^{N}\hat{X}^{(l)}
    =\frac{1}{NK}\sum_{i=1}^{N}\sum_{k=1}^{K}X_k^{(l)}.
\end{align}
The unbiasedness of this estimator can be verified since
\begin{align}
    \mathbb{E}_{U}\mathbb{E}_{X}\left[\hat{\mathcal{R}}\right]
    &=\mathbb{E}_{U}\mathbb{E}_{X}\left[\frac{1}{NK}\sum_{i=1}^{N}\sum_{k=1}^{K}X_k^{(l)}\right]
    =\mathcal{R}(\vr).
\end{align}

Now, we derive the variance of the estimator $\hat{\mathcal{R}}$. By the definition of variance, we have
\begin{align}
    \mathrm{Var}\left[\hat{\mathcal{R}}\right]
    &=\mathbb{E}_{U}\mathbb{E}_{X}\left[\hat{\mathcal{R}}^2\right]-\mathbb{E}_{U}\mathbb{E}_{X}\left[\hat{\mathcal{R}}\right]^2\nonumber\\
    &=\mathbb{E}_{U}\mathbb{E}_{X}\left[\frac{1}{N^2K^2}\sum_{l_1,l_2=1}^{N}\sum_{k_1,k_2=1}^{K}X_{k_1}^{(l_1)}X_{k_2}^{(l_2)}\right]-\left[\mathcal{R}(\vr)\right]^2\nonumber\\
    &=\frac{1}{N^2K^2}\sum_{l_1,l_2=1}^{N}\sum_{k_1,k_2=1}^{K}\mathbb{E}_{U}\mathbb{E}_{X}\left[X_{k_1}^{(l_1)}X_{k_2}^{(l_2)}\right]-\left[\mathcal{R}(\vr)\right]^2.
\end{align}

For $l_1=l_2=l$, the first term is
\begin{align}
    \mathrm{first~term}
    &=\frac{1}{N^2K^2}\sum_{l=1}^{N}\sum_{k_1,k_2=1}^{K}\mathbb{E}_{U}\mathbb{E}_{X}\left[X_{k_1}^{(l)}X_{k_2}^{(l)}\right]\nonumber\\
    &=\frac{1}{N^2}\sum_{l=1}^{N}\mathbb{E}_{U}\left[\mathbb{E}_{X}\left[\left(\frac{1}{K}\sum_{k_1=1}^{K}X_{k_1}^{(l_1)}\right)^2\right]\right]\nonumber\\
    &=\frac{1}{N^2}\sum_{l=1}^{N}\mathbb{E}_{U}\left[\frac{E_U^{(l)}(\vr)}{K}+\frac{K-1}{K}\left[E_U^{(l)}(\vr)\right]^2\right]\nonumber\\
    &=\frac{1}{NK}\mathcal{R}(\vr)+\frac{K-1}{NK}\int dU\left[E_U(\vr)\right]^2
\end{align}
The item
\begin{align}
    \int dU\left[E_U(\vr)\right]^2
    &=\int \tr\left[U^{\otimes2}\vr U^{\dag\otimes2}\mathcal{M}^{\otimes 2}\otimes U^{\otimes2}\vr U^{\dag\otimes2}\mathcal{M}^{\otimes 2}\right]\nonumber\\
    &=\tr\left[\vr^{\otimes2}U^{\otimes 2}\mathcal{M}^{\otimes2}U^{\dag\otimes2}\right]
    =\mathcal{O}(1)\tr(\vr^2),
\end{align}
where the last equation has been proved in the last subsection.

For $l_1\neq l_2$, the first term is
\begin{align}
    \mathrm{first~term}
    &=\frac{1}{N^2K^2}\sum_{l_1\neq l_2=1}^{N}\sum_{k_1,k_2=1}^{K}\mathbb{E}_{U}\mathbb{E}_{X}\left[X_{k_1}^{(l_1)}X_{k_2}^{(l_2)}\right]\nonumber\\
    &=\frac{1}{N^2}\sum_{l_1\neq l_2=1}^{N}\mathbb{E}_{U}\left[\mathbb{E}_{X}\left[\sum_{k_1=1}^{K}\frac{1}{K}X_{k_1}^{(l_1)}\right]\mathbb{E}_{X}\left[\sum_{k_2=1}^{K}\frac{1}{K}X_{k_2}^{(l_2)}\right]\right]\nonumber\\
    &=\frac{1}{N^2}\sum_{l_1\neq l_2=1}^{N}\mathbb{E}_{U}\left[E_{U}^{(l_1)}(\vr)E_{U}^{(l_2)}(\vr)\right]\nonumber\\
    &=\frac{N-1}{N}\mathbb{E}_{U}\left[E_{U}^{(l_1)}(\vr)E_{U}^{(l_2)}(\vr)\right]\nonumber\\
    &=\frac{N-1}{N}\mathbb{E}_{U}\left[E_{U}^{(l_1)}(\vr)\right]\mathbb{E}_{U}\left[E_{U}^{(l_2)}(\vr)\right]\nonumber\\
    &=\frac{N-1}{N}\left[\mathcal{R}(\vr)\right]^2.
\end{align}

Hence, the variance of $\hat{\mathcal{R}}$ is
\begin{align}
    \mathrm{Var}\left[\hat{\mathcal{R}}\right]
    &=\frac{1}{NK}\mathcal{R}(\vr)+\frac{K-1}{NK}\mathcal{O}(1)\tr(\vr^2)+\frac{N-1}{N}\left[\mathcal{R}(\vr)\right]^2-\left[\mathcal{R}(\vr)\right]^2\nonumber\\
    &\leq\frac{1}{NK}\mathcal{R}(\vr)+\frac{K-1}{NK}\mathcal{O}(1)\tr(\vr^2).
\end{align}

Similarly, we can calculate the variance of the orthogonal part associated with the moment $\mathcal{Q}(\vr)$. Based on the analysis of the last subsection, we find that the variance $\mathrm{Var}\left[\hat{\mathcal{Q}}\right]\leq\frac{1}{NK}\mathcal{Q}(\vr)+\frac{K-1}{NK}\mathcal{O}(1)\tr(\vr^2)$, where $\hat{\mathcal{Q}}$ is an unbiased estimation of the moment $\mathcal{Q}(\vr)$. In summary, the unbiased estimation of the fidelity can be constructed by $\hat{F}=(d+2)\hat{\mathcal{Q}}-(d+1)\hat{\mathcal{R}}$. We find the variance of the estimator $\hat{F}$ is
\begin{align}
    \mathrm{Var}\left[\hat{F}\right]
    &=(d+2)^2\mathrm{Var}\left[\hat{\mathcal{Q}}\right]+(d+1)^2\mathrm{Var}\left[\hat{\mathcal{R}}\right]\nonumber\\
    &<\frac{(d+2)^2}{NK}\mathcal{Q}(\vr)+\frac{(d+2)^2(K-1)}{NK}\mathcal{O}(1)\tr(\vr^2)+\frac{(d+1)^2}{NK}\mathcal{R}(\vr)+\frac{(d+1)^2(K-1)}{NK}\mathcal{O}(1)\tr(\vr^2)\nonumber\\
    &=\frac{(d+2)^2}{NK}\mathcal{Q}(\vr)+\frac{(d+1)^2}{NK}\mathcal{R}(\vr)+2\frac{(d+2)^2(K-1)}{NK}\mathcal{O}(1)\tr(\vr^2)\nonumber\\
    &=\frac{2d+3+(2d+3)\tr(\vr\mathbb{S})+(d+1)\tr(\vr\mathbb{S}^{\top_B})}{dNK}+2\frac{(d+2)^2(K-1)}{NK}\mathcal{O}(1)\tr(\vr^2).
\end{align}
Here, we introduce a pure state $\Phi_1=\frac{1}{d^2}\sum_{j=0}^{d^2-1}\lambda_j\otimes\lambda_j$ since $\tr(\Phi_1^2)=1$ and the overlap $\tr(\Phi_1\vr)=\frac{1}{d^2}\sum_{j=0}^{d^2-1}T_{jj}
    =d^{-1}\tr(\vr\mathbb{S})$, which implies $\tr(\vr\mathbb{S})=d\tr(\Phi_1\vr)$. The trace $\tr(\vr\mathbb{S}^{\top_B})=d\tr(\vr|\phi_d^{+}\rangle\langle\phi_d^{+}|)$. Thus, the variance turns to
\begin{align}
    \mathrm{Var}\left[\hat{F}\right]
    &=\frac{2d+3+(2d+3)\tr(\vr\mathbb{S})+(d+1)\tr(\vr\mathbb{S}^{\top_B})}{dNK}+2\frac{(d+2)^2(K-1)}{NK}\mathcal{O}(1)\tr(\vr^2)\nonumber\\
    &=\frac{2d+3+(2d+3)d\tr(\Phi_1\vr)+(d+1)d\tr(\vr|\phi_d^{+}\rangle\langle\phi_d^{+}|)}{dNK}+2\frac{(d+2)^2(K-1)}{NK}\mathcal{O}(1)\tr(\vr^2)\nonumber\\
    &=\frac{2+3/d+(2d+3)\tr(\Phi_1\vr)+(d+1)\tr(\vr|\phi_d^{+}\rangle\langle\phi_d^{+}|)}{NK}+2\frac{(d+2)^2(K-1)}{NK}\mathcal{O}(1)\tr(\vr^2)\nonumber\\
    &<\frac{6+3/d+3d}{NK}+2\frac{(d+2)^2(K-1)}{NK}\mathcal{O}(1)\tr(\vr^2)\nonumber\\
    &=\frac{\mathcal{O}(3d)}{NK}+\frac{\mathcal{O}(d^2)(K-1)}{NK}\tr(\vr^2).
\end{align}
In the limit of an infinite number of measurements per random operation, we have $\lim_{K\to\infty}\left[\frac{\mathcal{O}(3d)}{NK}+\frac{\mathcal{O}(d^2)(K-1)}{NK}\tr(\vr^2)\right]=\frac{\mathcal{O}(d^2)}{N}\tr(\vr^2)$ which matches the result in the previous subsections. The total number of measurements required is $N_{\mathrm{tot}}=NK=\mathcal{O}(d^2/\delta\epsilon^2)$. Since $K$ is associated with the precision of expectation values of the observable, we care about the number of random operations. As a result, $N=\mathcal{O}(d^2)$ for general states, which has the same scaling as the direct fidelity estimation.

Now, consider the special case where the state is a maximally entangled state under different noise channels. In this case, the variance
\begin{align}
    \mathrm{Var}\left[\hat{F}\right]
    &=\frac{\mathcal{O}(3d)}{NK}+\frac{K-1}{NK}\tr(\vr^2).
\end{align}
Again, in the limit of an infinite number of measurements per random operation, we have $\lim_{K\to\infty}\left[\frac{\mathcal{O}(3d)}{NK}+\frac{K-1}{NK}\tr(\vr^2)\right]=\frac{1}{N}\tr(\vr^2)$, which also matches the result in the previous subsection. Since $K$ is associated with the precision of expectation values of the observable, we care about the number of random operations. As a result, $N=\mathcal{O}(1)$ for these special states by letting $K=\mathcal{O}(d)$. Compared with direct fidelity estimation $\mathcal{O}(d^2)$ and the MUBs method $\mathcal{O}(md)$, the randomized projection method in this work achieves $\mathcal{O}(1)$ for maximally entangled states under different noises, making it more efficient for high-dimensional quantum states.

Note that in the proof of the variance we assume sampled random unitaries are unitary $4$-design and random orthogonal matrices are orthogonal $4$-design.

\section{Confidence interval for the mean and the bootstrapping
resample technique}\label{AppendixC}
Suppose we are interested in estimating a population parameter, such as the mean $g$ of a population, from a random sample. Let $X = \{X_j\}_{j=1}^N$ be a sample of $N$ independent and identically distributed observations drawn from the population. The population distribution is unknown, and we do not impose any prior assumption about its form.

A common point estimator for the population mean $g$ is the sample mean $\bar{X}=\mathbb{E}[x]=\frac{1}{N}\sum_{j=1}^{N}X_j$, which is an unbiased estimation of the mean $g$. When the population variance is unknown, the confidence interval for $g$ is constructed by
\begin{align}\label{AppendixC:eq1}
    \bar{X}\pm t_{\alpha,N-1}\frac{s}{\sqrt{N}}.
\end{align}
Here, $s=\sqrt{\frac{\sum_{j=1}^{N}\left(X_j-\bar{X}\right)^2}{N-1}}$ is the sample standard deviation, and the critical factor $t_{\alpha,N-1}$ comes from the $t$-distribution. Note that $t_{\alpha,N-1}$ depends on the confidence level $\alpha$ and sample size $N$. For example, $t_{95\%,11}\neq t_{95\%,20}$.

The central limit Theorem~\cite{rosenblatt1956central,bellhouse2001central} states that for larger $N$ (often empirically taken as $N\geq30$), the sample mean $\mathbb{E}[x]$ is approximately normally distributed, with mean $g$ and standard deviation $\sigma(\bar{X})=\frac{\sigma}{\sqrt{N}}$. Regardless of the distribution of the population, as the sample size is increased, the sampling distribution of the sample mean becomes a normal distribution. That is, we can always obtain a confidence interval with sample size $N\geq30$, and the population distribution is not important.

For a small sample size $N<30$, the confidence interval is only efficient when the population distribution is normal. If the assumption is violated, the constructed confidence interval may be inaccurate. However, the $t$ distribution-based confidence interval is relatively robust to this assumption~\cite{montgomery2010applied}.

Two of the traditional methods for obtaining a confidence interval for the mean of a non-normal distribution are the central limit theorem~\cite{rosenblatt1956central,bellhouse2001central}, reported above, and the bootstrap resampling technique~\cite{efron1992bootstrap,efron1982jackknife,efron1994introduction}. The bootstrap resampling technique is a popular non-parametric method and is useful for small sample sizes. The detailed process is the following~\cite{efron1992bootstrap,efron1982jackknife,efron1994introduction,wang2001confidence,pek2017confidence}.

\textbf{Input:} Given a data set $X=\{X_j\}_{j=1}^N$ with $N$ samples randomly generated from the population. Generally, $N<30$.

\textit{Step 1.} Resample the observed sample with replacement and calculate the
sample mean for this bootstrap sample.

\textit{Step 2.} Repeat \textit{Step 1.} $B$ times.

\textbf{Output:} Construct the confidence interval for the data set with size $B$ by using the Eq.~(\ref{AppendixC:eq1}).

Using the above knowledge, we can estimate the fidelity in the main text. In particular, we express the fidelity as
\begin{align}
    F(\vr)&=(d+2)\mathcal{Q}(\vr)-(d+1)\mathcal{R}(\vr)
    =(d+2)\int dOE_O(\vr)-(d+1)\int dUE_U(\vr)
    =\int dWF_W(\vr),
\end{align}
where the quantity $F_W(\vr)=(d+2)E_O(\vr)-(d+1)E_U(\vr)$. We can treat the fidelity $F(\vr)$ as a mean of an unknown population. Thus, we can estimate the fidelity $F(\vr)$ from a statistical view. After sampling $N$ random unitaries and random orthogonal matrices, we obtain $N$ fidelities estimated by a single unitary and a single orthogonal matrix. Except for obtaining a point estimation from finite samples, we also try to construct a confidence interval of the unknown mean, $F(\vr)$. The construction approach follows Eq.~(\ref{AppendixC:eq1}).

Note that the population distribution underlying the mean $F(\vr)$ is unknown. Hence, it is unclear whether it follows a normal distribution. Empirically, the direct use of Eq.~(\ref{AppendixC:eq1}) for constructing a confidence interval typically requires a sample size $N\geq30$. In our setting, however, obtaining a large number of samples is challenging because each measurement involves implementing a local high-dimensional unitary or orthogonal operation, which is experimentally demanding. We therefore aim to extract the maximum information from a limited set of experimental observations. Consequently, we prefer to work with a small number of random operations, typically $N<30$. When the population distribution is unknown, confidence intervals built from such small samples may, in principle, be inaccurate. Nevertheless, as noted earlier, the $t$-distribution-based interval is relatively robust to departures from normality~\cite{montgomery2010applied}. As supported by SFig.~\ref{SFig1}, this direct construction performs satisfactorily even for $N<30$.

\begin{figure}[ht]
    \centering
    \includegraphics[scale=0.5]{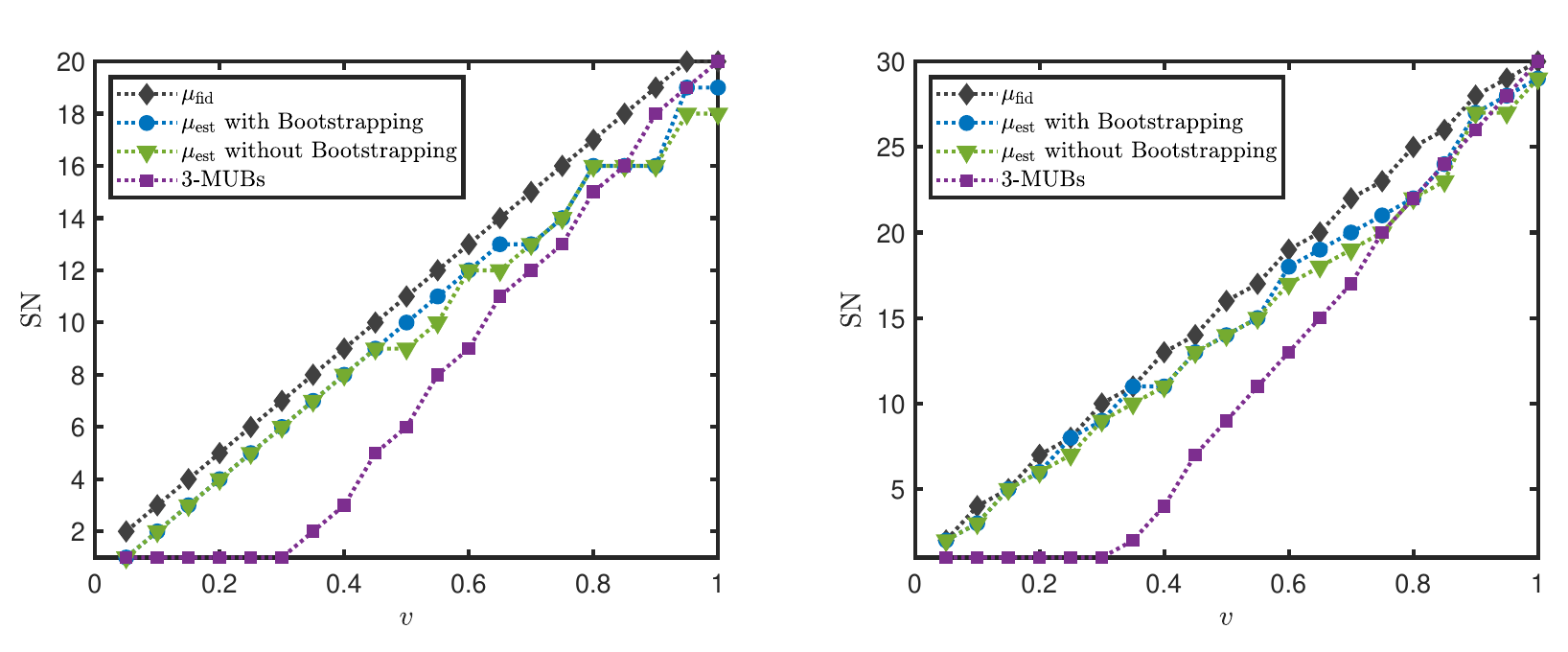}
    \caption{Test for algorithm based on Bootstrapping resampling technique. Numerical results for the states $\vr_{\mathrm{iso}}^v$ with dimension (a) $d=20$ and (b) $d=30$ by fixing $N=15$ and the CL $99.9\%$. We set the number of resampling $B=2000$.}
    \label{SFig1}
\end{figure}

\section{Analysis of numerical results}\label{AppendixD}
This section reviews the $3$-MUBs criterion~\cite{morelli2023resource} and the randomized measurements~\cite{liu2023characterizing,wyderka2023probing} for detecting the Schmidt number.

The $3$-MUBs criterion~\cite{morelli2023resource} employes $3$-MUBs to estimate the $\mathrm{SNs}$. For any dimensions, reference~\cite{li2025high} constructs the following $3$ MUBs, $\{|e_a^1\rangle=|a\rangle\}_{a=0}^{d-1}$, $\{|e_a^2\rangle\}_{a=0}^{d-1}$, and $\{|e_a^3\rangle\}_{a=0}^{d-1}$ with
\begin{align}
    |e_a^2\rangle&=\frac{1}{\sqrt{d}}\sum_{j=0}^{d-1}e^{2\pi i\left(\frac{\alpha j^2}{2d}+\frac{aj}{2d}\right)}|j\rangle,
    ~
    |e_a^3\rangle=\frac{1}{\sqrt{d}}\sum_{j=0}^{d-1}e^{2\pi i\left(\frac{(\alpha+d-1)j^2}{2d}+\frac{aj}{2d}\right)}|j\rangle.
\end{align}
Based on the criterion~\cite{morelli2023resource}, for any bipartite state $\vr$ with dimension $d$ and $\mathrm{SN}$ at most $\mu$ it hold that $\mathcal{S}_{3,d}(\vr)\leq1+\frac{2\mu}{d}$, where the quantity
\begin{align}\label{SII:eq1}
    \mathcal{S}_{3,d}(\vr)=\sum_{z=1}^{3}\sum_{a=0}^{d-1}\tr\left[\vr|e_a^z\rangle\langle e_a^z|\otimes|e_a^{z*}\rangle\langle e_a^{z*}|\right].
\end{align}
    
It is clear from Eq.~(\ref{SII:eq1}) that the total number of projections required to perform $3$ MUBs is $3d$, which scales linearly with dimension $d$. In contrast, our approach requires $2N$ projections, where $N$ represents the number of random unitary or orthogonal matrices.

The randomized measurements obtain the quantity $R^{(t)}=\int dU_A\int dU_BE_{AB}^t$, where $t$ is an integer, and the expectation value is defined as
\begin{align}
    E_{AB}=\tr\left[(U_A\otimes U_B)\vr(U_A^{\dag}\otimes U_B^{\dag})(G_A\otimes G_B)\right],
\end{align}
for some well-defined observables $G_A$ and $G_B$. The random unitary $U_A$ and $U_B$ are sampled via the Haar measure on the unitary group~\cite{goodman2000representations,collins2006integration,gross2007evenly,dankert2009exact}.

Second, we consider the decomposition of $\vr$ in terms of the Gell-Mann matrices
\begin{align}
    \vr=\frac{1}{d^2}&\left[\eins_d\otimes\eins_d+\sum_{j=1}^{d^2-1}\alpha_j\lambda_j\otimes\eins_d+\sum_{k=1}^{d^2-1}\eins_d\otimes\beta_k\lambda_k+\sum_{j,k=1}^{d^2-1}T_{jk}\lambda_j\otimes\lambda_k\right].
\end{align}
In this way, the reduced states are
\begin{align}
    \vr_A=\tr_B(\vr)=\frac{1}{d}\left(\eins_d+\sum_{j=1}^{d^2-1}\alpha_j\lambda_j\right),
    \vr_B=\tr_B(\vr)=\frac{1}{d}\left(\eins_d+\sum_{k=1}^{d^2-1}\beta_k\lambda_k\right).
\end{align}
We define the correlation matrix $T=(T_{jk})$ for $j,k=1,\cdots,d^2-1$. In particular, we have
\begin{align}
    \tr(\vr^2)&=\frac{1}{d^2}\Big(1+\sum_{j=1}^{d^2-1}\alpha_j^2+\sum_{k=1}^{d^2-1}\beta_k^2+\sum_{j,k=1}^{d^2-1}T_{jk}^2\Big),\nonumber\\
    \tr(\vr_{A}^2)&=\frac{1}{d}\Big(1+\sum_{j=1}^{d^2-1}\alpha_j^2\Big),
    \tr(\vr_{B}^2)=\frac{1}{d}\Big(1+\sum_{k=1}^{d^2-1}\beta_k^2\Big),
\end{align}
and then the purity is
\begin{align}
    \tr(\vr^2)=\frac{1}{d^2}\Big[d\tr(\vr_{A}^2)+d\tr(\vr_{B}^2)-1+\sum_{j,k=1}^{d^2-1}T_{jk}^2\Big].
\end{align}
It is clear to see that
\begin{align}
    \|T\|_{\tr}^2&=\sum_{j,k=1}^{d^2-1}T_{jk}^2
    =d^2\tr(\vr^2)-d\tr(\vr_{A}^2)-d\tr(\vr_{B}^2)+1.
\end{align}
Note that each purities $\tr(\vr^2)$, $\tr(\vr_{A}^2)$, and $\tr(\vr_{B}^2)$ can be accessed by the second moments $R^{(2)}$.

The criterion based on the second moments is constructed by calculating the second moments of the $d\otimes d$ state $|\phi_{+}^{\mu}\rangle=\frac{1}{\sqrt{\mu}}\sum_{j=0}^{\mu-1}|jj\rangle$. For $\mu=1,2,\cdots,d$, we obtain a data set
\begin{align}
    \left[R^{(2)}(|\phi_{+}^{1}\rangle\langle\phi_{+}^{1}|),\cdots,R^{(2)}(|\phi_{+}^{d}\rangle\langle\phi_{+}^{d}|)\right].
\end{align}
Then given a state $\vr$ if $R^{(2)}(\vr)\leq R^{(2)}(|\phi_{+}^{\mu}\rangle\langle\phi_{+}^{\mu}|)$ for $\mu\in[1,d-1]$ then $\vr$ has a Schmidt number $\mu$. If $R^{(2)}(\vr)> R^{(2)}(|\phi_{+}^{d-1}\rangle\langle\phi_{+}^{d-1}|)$, then $\vr$ has a Schmidt number $d$. See details in Refs.~\cite{imai2021bound,liu2023characterizing,wyderka2023probing}. We call this criterion $2$-RMs as shown in SFig.~\ref{SFig2}(a).

The purified thermal states mixed with white noise are
\begin{align}
    \vr_{\mathrm{thermal}}(v,\beta)=v|\psi(\beta)\rangle\langle\psi(\beta)|+\frac{v}{d^2}\eins_{d^2},
\end{align}
where $v\in[0,1]$ and the pure state
\begin{align}
    |\psi(\beta)\rangle=\frac{1}{\sqrt{\sum_{j=0}^{d-1}e^{-\beta j}}}\sum_{j=0}^{d-1}e^{-\beta j}|jj\rangle.
\end{align}
Furthermore, we consider the random noise state
\begin{align}
    \vr_{\mathrm{rand}}^{v}(\sigma)&=v|\phi_d^+\rangle\langle\phi_d^+|+(1-v)\sigma,
\end{align}
where $v\in[0,1]$ and the random states $\sigma$ is sampled according to the Hilbert-Schmidt metric~\cite{zyczkowski2005average}.

Finally, we simulate the performance of the $\mathrm{SN}$ estimation algorithm in an experimental scenario. Note that in the main text, we iteratively run the $\mathrm{SN}$ estimation algorithm $N_{\mathrm{iter}}$ times for a fixed $N$. However, practically iterating many runs is challenging. Thus, we are capable of obtaining $2N$ expectation values by sampling $N$ random unitaries and $N$ random orthogonal matrices. Then, we estimate the $\mathrm{SN}$ by using the obtained experimental data and the estimation algorithm in the main text. SFig.~\ref{SFig2}, SFig.~\ref{SFig3}, and Table~\ref{table1} display additional numerical results for different states. These results suggest that $N=30$ is efficient for high dimensions and exhibits significant advantages compared with the $3$-MUBs criterion and the second-order moment from randomized measurements.

\begin{figure}[t]
    \centering
    \includegraphics[scale=0.5]{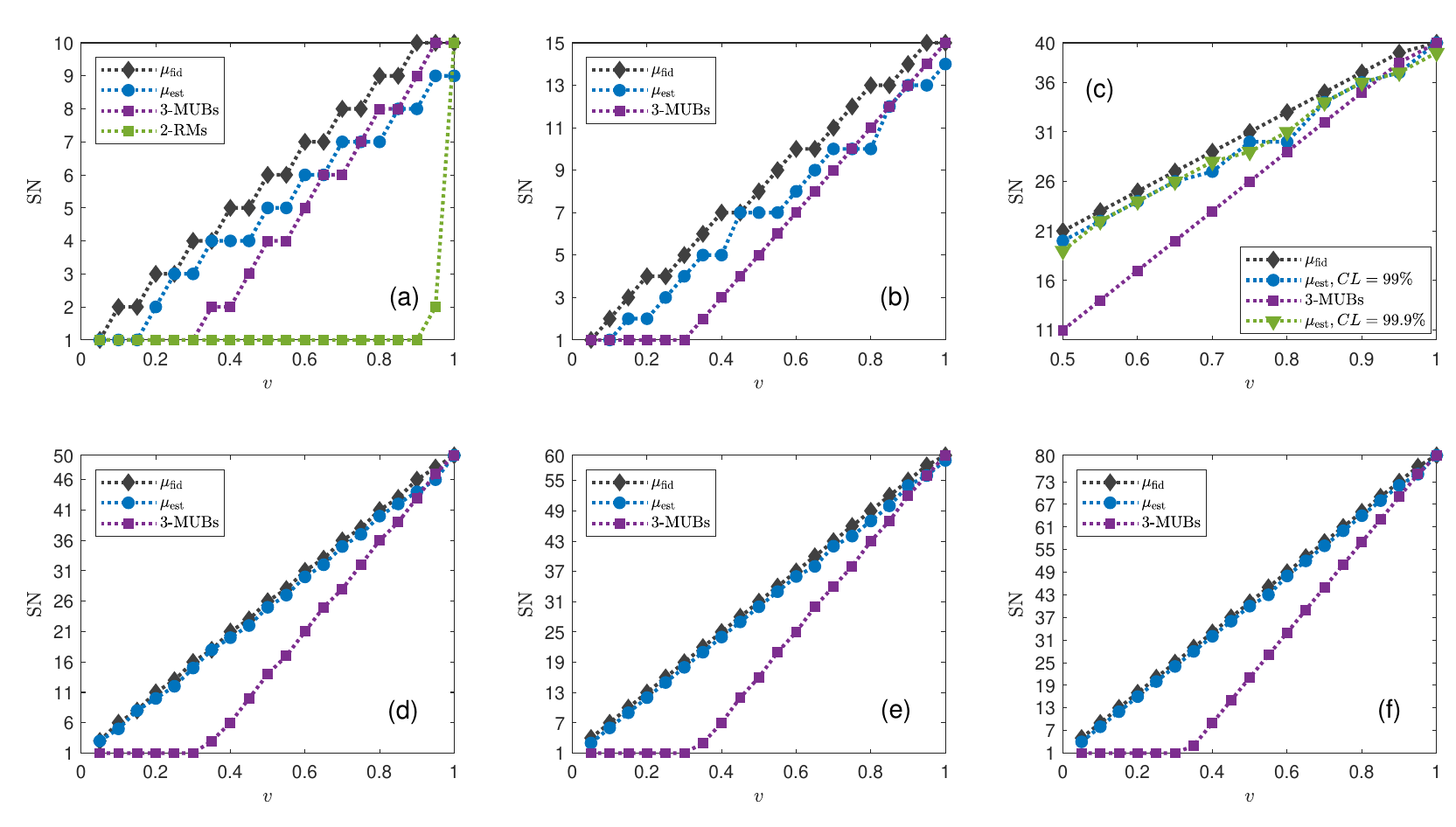}
    \caption{Numerical results for the states $\vr_{\mathrm{iso}}^v$ with dimension (a) $d=10$, (b) $d=15$, (c) $d=40$, (d) $d=50$, (e) $d=60$, (f) $d=80$. For (a,b), the number of measurements for each product observable is $K=1000$. For (c-f), we directly calculate the expectation values of observables. In (a,b,d,e,f), the CL is $99\%$.}
    \label{SFig2}
\end{figure}

\begin{table}[htbp]
\centering
\caption{Comparison between $3$-MUBs criterion and our approach for detecting Schmidt number of the state $\vr_{\mathrm{rand}}^{v}(\sigma)$. Here, $N=30$, the parameter $v\in[0,1]$ and the state $\sigma$ are random for each iteration. The number of measurements for each product observable is $K=1000$, and the CL is $99\%$. We randomly sample $1000$ random states $\sigma$ and exhibit the percentages.}
\label{table1}
\begin{tabular}{|c|c|c|c|}
\hline
 & Our method is better & Equivalent performance & $3$-MUBs is better \\ 
\hline
$d=20$ & $76\%$ & $14.8\%$ & $9.2\%$ \\
\hline
$d=30$ & $85.3\%$ & $9.3\%$ & $5.4\%$ \\
\hline
$d=40$ & $88\%$ & $8.2\%$ & $3.8\%$ \\
\hline
\end{tabular}
\end{table}

\begin{figure}[t]
    \centering
    \includegraphics[scale=0.5]{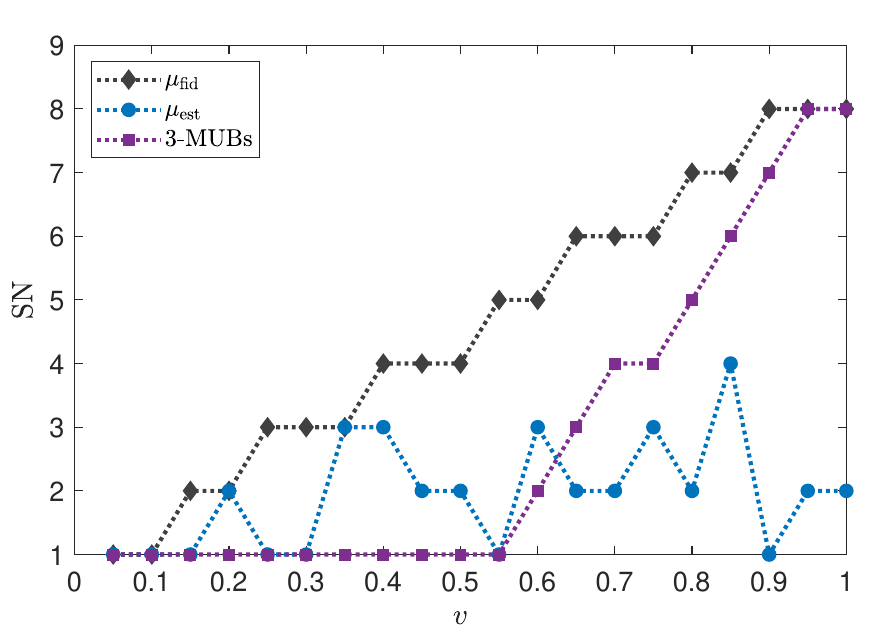}
    \caption{The results for the noisy two-qudit purified thermal states $\vr_{\mathrm{thermal}}(v,\beta)$ with dimension $d=20$ and the parameter $\beta=0.5$. We exactly calculate the expectation values for each product observable, and the CL is $99.9\%$.}
    \label{SFig3}
\end{figure}

The code for generating random unitaries and orthogonal matrices can be found in the link: "http://www.qetlab.com/RandomUnitary".

\section{The entanglement quantification}\label{AppendixE}
Here, we show that the moments $\mathcal{R}(\vr)$ and $\mathcal{Q}(\vr)$ induce lower bounds of entanglement measures in quantum information.
	
An important entanglement measure is based on the trace distance, $E_{T}(\vr)=\min_{\sigma\in\mathrm{SEP}}D_{T}(\vr,\sigma)$, given by the smallest trace distance between $\vr$ and separable state set $\mathrm{SEP}$~\cite{nielsen2010quantum}, where the trace distance is defined as half of the trace norm, $D_{T}(\vr,\sigma)=\frac{1}{2}\|\vr-\sigma\|_{\tr}$, and $\|C\|_{\tr}$ denotes the trace norm of $C$, i.e., the sum of the absolute values of eigenvalues of $C$. We present our results as follows.
\begin{observation}\label{ob2}
    For any bipartite state $\vr$ of equal dimension $d$, the entanglement measure based on the trace distance $E_T(\vr)$ has a lower bound,
    \begin{align}
		E_T(\vr)\geq\frac{1}{d}-(d+2)\mathcal{Q}(\vr)+(d+1)\mathcal{R}(\vr).
	\end{align}
\end{observation}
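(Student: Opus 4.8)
The plan is to bound $E_T(\vr)$ from below by exhibiting a single, explicitly normalized witness operator whose expectation value we can already compute from the moments $\mathcal{R}(\vr)$ and $\mathcal{Q}(\vr)$. Recall the standard fact that for any Hermitian operator $W$ with operator norm $\norm{W}_\infty \le 1$ that is a valid entanglement witness (i.e. $\tr(W\sigma)\ge 0$ for all $\sigma\in\mathrm{SEP}$), one has $E_T(\vr)\ge -\tr(W\vr)$ whenever $\tr(W\vr)<0$; more precisely $\frac12\norm{\vr-\sigma}_{\tr}\ge \frac12\tr\big(W(\sigma-\vr)\big)\ge -\frac12\tr(W\vr)$ for the optimal $\sigma$, and a factor-of-two sharpening is available because $\tr(W\sigma)\ge0$. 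The natural candidate here is built from the $\mathrm{SN}=2$ fidelity witness, namely $W = \eins_d^{\otimes2}/d - \ket{\phi_d^+}\bra{\phi_d^+} = \eins_d^{\otimes2}/d - \mathbb{S}^{\top_B}/d$, since separability forces $F(\vr)\le 1/d$ so $\tr(W\sigma)\ge0$ on $\mathrm{SEP}$.

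First I would verify that $W$ has the correct normalization: its eigenvalues are $1/d$ on the orthocomplement of $\ket{\phi_d^+}$ and $1/d-1 = -(d-1)/d$ on $\ket{\phi_d^+}$, so $\norm{W}_\infty = (d-1)/d < 1$, which means $E_T(\vr)\ge -\tr(W\vr)$ directly (and in fact with room to spare). Next I would compute $-\tr(W\vr) = F(\vr) - 1/d$ and substitute the formula for $F(\vr)$ from Observation~\ref{ob1}, Eq.~(\ref{II:eq2}), giving
\begin{align}
    E_T(\vr)\ge F(\vr)-\frac{1}{d} = (d+1)\mathcal{R}(\vr)-\frac{d-1}{2}\mathcal{Q}(\vr)-\frac{2}{d},
\end{align}
which is exactly the claimed bound. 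The only subtlety is the normalization convention for $E_T$: since $D_T(\vr,\sigma)=\tfrac12\norm{\vr-\sigma}_{\tr}$ carries a factor $\tfrac12$, I would check that $\tfrac12\tr(W(\sigma-\vr))\le \tfrac12\norm{W}_\infty\norm{\sigma-\vr}_{\tr}$ combined with $\tr(W\sigma)\ge0$ yields $D_T(\vr,\sigma)\ge \tfrac12\cdot\tfrac{d}{d-1}\big(-\tr(W\vr)\big)\ge -\tr(W\vr)$, using $\tfrac{d}{2(d-1)}\ge \tfrac12$ — so the stated (non-optimized) bound holds comfortably, and one could even tighten the constant.

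The main obstacle, such as it is, is purely bookkeeping: getting the duality inequality between trace distance and witnesses stated with the right constants (the $\tfrac12$ in $D_T$, the operator norm of $W$, and the nonnegativity of $\tr(W\sigma)$ on $\mathrm{SEP}$ all interact), and confirming that the crude witness $W$ is admissible, i.e. that $\norm{W}_\infty\le 1$ and that $W$ is genuinely nonnegative on separable states — the latter is immediate from $F(\sigma)\le 1/d$ for $\mathrm{SR}(\sigma)=1$ pure states and convexity. No hard estimate is needed; the content is entirely the reduction to Observation~\ref{ob1}.
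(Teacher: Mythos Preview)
Your overall strategy is exactly the paper's: use the fidelity witness $\mathcal{W}_1=\eins_d^{\otimes2}/d-|\phi_d^{+}\rangle\langle\phi_d^{+}|$ to get $E_T(\vr)\ge -\tr(\mathcal{W}_1\vr)=F(\vr)-1/d$, then plug in Observation~\ref{ob1}. But your bookkeeping step contains a real error, not just a cosmetic one.

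From the H\"older/duality inequality you write, $\tfrac12\tr\big(W(\sigma-\vr)\big)\le \tfrac12\norm{W}_\infty\norm{\sigma-\vr}_{\tr}$, together with $\tr(W\sigma)\ge0$, you correctly obtain
\[
D_T(\vr,\sigma)\ \ge\ \frac{1}{2\norm{W}_\infty}\,\tr\big(W(\sigma-\vr)\big)\ \ge\ \frac{d}{2(d-1)}\bigl(-\tr(W\vr)\bigr).
\]
However, $\tfrac{d}{2(d-1)}\ge 1$ holds only for $d\le2$; for every $d\ge3$ this prefactor is strictly less than $1$, so you cannot pass to $D_T(\vr,\sigma)\ge -\tr(W\vr)$. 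Your remark ``using $\tfrac{d}{2(d-1)}\ge\tfrac12$'' is true but irrelevant---you need the factor to be at least $1$, not at least $\tfrac12$. Likewise, the ``factor-of-two sharpening because $\tr(W\sigma)\ge0$'' does not exist in the form you state: the nonnegativity on $\mathrm{SEP}$ is already used to drop $\tr(W\sigma)$, and no further gain comes from it.

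The missing idea, which the paper supplies, is to exploit $\tr(\vr-\sigma)=0$ to shift the witness by a multiple of the identity before invoking duality. Concretely, set $\mathcal{W}_1'=2\mathcal{W}_1-(\tfrac{2}{d}-1)\eins_d^{\otimes2}=\eins_d^{\otimes2}-2|\phi_d^{+}\rangle\langle\phi_d^{+}|$, whose eigenvalues are $\pm1$ so that $\norm{\mathcal{W}_1'}_\infty=1$ exactly. Then
\[
D_T(\vr,\sigma)\ \ge\ \tfrac12\bigl|\tr\bigl(\mathcal{W}_1'(\sigma-\vr)\bigr)\bigr|\ =\ \bigl|\tr\bigl(\mathcal{W}_1(\sigma-\vr)\bigr)\bigr|\ \ge\ -\tr(\mathcal{W}_1\vr),
\]
with no loss of constant. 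Equivalently, use the variational form $D_T(\vr,\sigma)=\max_{0\le P\le\eins}\tr\bigl(P(\vr-\sigma)\bigr)$ with the projector $P=|\phi_d^{+}\rangle\langle\phi_d^{+}|$, which gives $E_T(\vr)\ge F(\vr)-F(\sigma_{\mathrm{opt}})\ge F(\vr)-1/d$ directly. After this fix, your substitution from Observation~\ref{ob1} is correct and yields the stated bound.
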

\begin{proof}
    Let $\sigma_{\mathrm{opt}}$ be the closest separable state. We have the inequality, $E_T(\vr)=\frac{1}{2}\|\vr-\sigma_{\mathrm{opt}}\|_1\geq\frac{1}{2}|\tr[(\vr-\sigma_{\mathrm{opt}})]|\geq\frac{1}{2}|\tr[(\vr-\sigma_{\mathrm{opt}})\mathcal{W}_1^{'}]|\geq-\tr(\vr\mathcal{W}_1)$, where $\mathcal{W}_1=\eins_d^{\otimes2}/d-|\phi_d^{+}\rangle\langle\phi_d^{+}|$ is the fidelity-based entanglement witness~\cite{guhne2021geometry} and $\mathcal{W}_1^{'}=2\mathcal{W}_1-(\frac{2}{d}-1)\eins_d^{\otimes^2}$. The first inequality holds due to the absolute value inequality. The second inequality is true since the absolute values of the eigenvalues of $\mathcal{W}_1^{'}$ are equal to one. The third inequality is based on the property of the entanglement witness $\mathcal{W}_1$. By expressing $\tr(\vr\mathcal{W}_1)$ in terms of $\mathcal{R}(\vr)$ and $\mathcal{Q}(\vr)$ based on observation 1,
    \begin{align}
        \tr(\vr\mathcal{W}_1)&=\frac{1}{d}-\tr(\vr|\phi_d^{+}\rangle\langle\phi_d^{+}|)
        =\frac{1}{d}-F(\vr)
        =\frac{1}{d}-(d+2)\mathcal{Q}(\vr)+(d+1)\mathcal{R}(\vr),
    \end{align}
    we complete the proof.
\end{proof}

Following the Refs.~\cite{zhang2016evaluation,sun2024bounding}, other entanglement measures such as the concurrence~\cite{wootters1998entanglement}, the G-concurrence~\cite{gour2005family}, the entanglement of formation~\cite{wootters2001entanglement}, the geometric measure of entanglement~\cite{wei2003geometric}, and the robustness of entanglement~\cite{vidal1999robustness} have a connection with the entanglement witness $\mathcal{W}_1$. Thus, these entanglement measures can also be bounded by a function of $\mathcal{R}(\vr)$ and $\mathcal{Q}(\vr)$.

\section{Randomized symmetric projections}\label{AppendixF}
In the main text, we propose a method to estimate the fidelity with the maximally entangled state using random unitaries and orthogonal matrices. Here, we note that the estimation of fidelity can be achieved by using only random unitaries.

Given an observable $\mathcal{M}=|j\rangle\langle j|$ and random unitary $U$, the 
integral
\begin{align}
    \int dUU^{\otimes2}\mathcal{M}^{\otimes2}U^{\dag\otimes2}
    &=a_{1}\eins_d\otimes\eins_d+a_{2}\mathbb{S}
    =\frac{\eins_d\otimes\eins_d+\mathbb{S}}{d(d+1)},
\end{align}
where the last equation uses the Eq.(\ref{Ac1:eq2}). Thus, $\mathrm{SWAP}$ operator can be rewritten as
\begin{align}
    \mathbb{S}&=d(d+1)\int dUU^{\otimes2}\mathcal{M}^{\otimes2}U^{\dag\otimes2}-\eins_d\otimes\eins_d
    =d(d+1)\int dU|e_j\rangle\langle e_j|\otimes|e_j\rangle\langle e_j|
    -\eins_d\otimes\eins_d,
\end{align}
where the pure state $|e_j\rangle=U|j\rangle$. For a bipartite state $\vr$, the fidelity is
\begin{align}
    F(\vr) & = \tr(\vr|\phi_d^{+}\rangle\langle\phi_d^{+}|)
    =\frac{1}{d}\tr(\vr\mathbb{S}^{\top_B})\nonumber\\
    &=(d+1)\int dU\tr\left[\vr U\mathcal{M}U^{\dag}\otimes\left(U\mathcal{M}U^{\dag}\right)^{\top}\right]- \frac{1}{d}\nonumber\\
    &=(d+1)\int dU\tr\left[\vr U\mathcal{M}U^{\dag}\otimes U^{*}\mathcal{M}^{\top}U^{T}\right]- \frac{1}{d}\nonumber\\
    &=(d+1)\int dU\tr\left[\vr U\mathcal{M}U^{\dag}\otimes U^{*}\mathcal{M}U^{T}\right]- \frac{1}{d}\nonumber\\
    &= (d+1)\int dU\tr\left[\vr\left(|e_j\rangle\langle e_j|\otimes|e_j\rangle\langle e_j|\right)^{\top_B}\right] - \frac{1}{d}\nonumber\\
    &= (d+1)\int dU\tr\left[\vr\left(|e_j\rangle\langle e_j|\otimes|e_j^{*}\rangle\langle e_j^{*}|\right)\right] - \frac{1}{d},\label{E:eq1}
\end{align}
where $|e_j^{*}\rangle$ denotes the complex conjugate of $|e_j\rangle$. Hence, we require to perform $U\otimes U^{*}$ on $\vr$ and then measure the product observable $|j\rangle\langle j|^{\otimes2}$. This process equivalents to perform random symmetric projections $|e_j\rangle\langle e_j|\otimes|e_j^{*}\rangle\langle e_j^{*}|$ on $\vr$. The measured results can be used to estimate the fidelity by Eq.(\ref{E:eq1}) and then infer the Schmidt number. In practice, the fidelity can be estimated by a finite number of random samples
\begin{align}
    F(\vr) = (d+1)\sum_{l=1}^{N}\tr\left[\vr\left(|e_j^{l}\rangle\langle e_j^{l}|\otimes|e_j^{l*}\rangle\langle e_j^{l*}|\right)\right]-\frac{1}{d}.
\end{align}
Remark that this method has also been proposed in Ref.~\cite{liu2022characterizing}. Compared with our strategy in the main text, we use the same random operations, which can be achieved by using a fixed measurement channel. Moreover, we use random orthogonal matrices, which are only real-value matrices and thus are more easily performed experimentally. Interestingly, Eq.(\ref{E:eq1}) introduces a random symmetric projection to detect the Schmidt number. In this view, the MUB criterion~\cite{morelli2023resource} is a special case of symmetric projections. Thus, it would be interesting to investigate the performance of randomized symmetric projections.

\end{document}